\pgfplotsset{width=7cm,compat=1.9}
\colorlet{ggreen}{green!50!black}
\newcolumntype{L}{>{$}l<{$}}
\DeclareMathOperator{\supp}{supp}
\DeclareMathOperator{\awaeOp}{\mathsf{awae}}
\newcommand{\awae}[2]{\awaeOp_{{#1}}^{{#2}}}
\DeclareMathOperator{\HH}{H}
\DeclareMathOperator{\V}{V}
\DeclareMathOperator{\Awae}{\mathsf{Awae}}
\DeclareMathOperator{\obj}{obj}
\DeclareMathOperator{\OP}{\mathsf{OP}}
\DeclareMathOperator{\id}{\mathsf{id}}
\newcommand{\ourparagraph}[1]{{\smallskip\noindent {\bf {#1}.}}}
\renewcommand\vec{\mathbf}
\renewcommand\phi{\varphi}
\renewcommand{\epsilon}{\varepsilon}
\newcommand{\XA}{X_\mathbb{A}}
\newcommand{\XT}{X_\mathbb{T}}
\newcommand{\XS}{X_\mathbb{S}}
\newcommand{\xA}{\vec{x_\mathbb{A}}}
\newcommand{\xT}{\vec{x_\mathbb{T}}}
\newcommand{\xS}{\vec{x_\mathbb{S}}}
\newcommand{\xx}{\vec{x}}
\newcommand{\DA}{D_\mathbb{A}}
\newcommand{\DT}{D_\mathbb{T}}
\newcommand{\DS}{D_\mathbb{S}}
\newcommand{\DO}{D_O}
\newcommand{\DOp}{D_{O'}}
\newcommand{\DP}{D_{\Phi}}
\newcommand{\bP}{\mathbb{P}}
\newcommand{\bA}{\mathbb{A}}
\newcommand{\bT}{\mathbb{T}}
\newcommand{\bS}{\mathbb{S}}
\newcommand{\bV}{\mathbb{V}}
\newcommand{\cW}{\mathcal{W}}
\newcommand{\cX}{\mathcal{X}}
\newcommand{\pA}{\pi_{\bA}}
\newcommand{\pT}{\pi_{\bT}}
\newcommand{\pS}{\pi_{\bS}}
\newcommand{\pP}{{\pi_{\Phi}}}
\newcommand{\Z}{\mathbb{Z}}
\newcommand{\R}{\mathbb{R}}
\newcommand{\N}{\mathbb{N}}
\newcommand{\cP}{\mathcal{P}}
\newcommand{\Np}{\mathbb{N}_{> 0}}
\newcommand{\Rpe}{\mathbb{R}_{> 0}}
\newcommand{\Rp}{\mathbb{R}_{\geq 0}}
\newcommand{\plus}{\scalebox{.6}{$+$}}
\newcommand{\tid}[1]{{#1}^{\plus}}
\newcommand{\abs}[1]{\lvert {#1} \rvert}
\newcommand{\llrr}[2]{\llbracket {#1}, {#2} \rrbracket}
\newcommand{\norm}[1]{\left\lVert#1\right\rVert}
\newcommand{\vv}[1]{\left\langle {#1} \right\rangle}
\newcommand{\card}[1]{\left|{#1}\right|}
\newtheorem{example}{Example}
\newtheorem{theorem}{Theorem}
\newtheorem{corollary}{Corollary}
\newtheorem{lemma}{Lemma}
\newtheorem{definition}{Definition}
\newtheorem{assumption}{Assumption}
\newtheorem{method}{Method}
\begin{document}

\begin{center}
{\Large \bf Optimal Accuracy-Privacy Trade-Off for\\ Secure Multi-Party Computations}\\

\bigskip
\bigskip
{\small Patrick Ah-Fat and Michael Huth\\
Department of Computing, Imperial College London\\
London, SW7 2AZ, United Kingdom\\
$\{$patrick.ah-fat14, m.huth$\}$@imperial.ac.uk}
\end{center}

\date{\today}

\bigskip
\begin{abstract}
The purpose of Secure Multi-Party Computation is to enable protocol participants to compute a public function of their private inputs while keeping their inputs secret, without resorting to any trusted third party. 
However, opening the public output of such computations inevitably reveals some information about the private inputs. 
We propose a measure generalising both R\'enyi entropy and $g$-entropy so as to quantify this information leakage. 
In order to control and restrain such information flows, we introduce the notion of function substitution which replaces the computation of a function that reveals sensitive information with that of an approximate function. 
We exhibit theoretical bounds for the privacy gains that this approach provides and experimentally show that this enhances the confidentiality of the inputs while controlling the distortion of computed output values. 
Finally, we investigate the inherent compromise between accuracy of computation and privacy of inputs and we demonstrate how to realise such optimal trade-offs. 
\end{abstract}

\bigskip
\noindent {\bf Keywords:}
Information Flow, R\'enyi Entropy, $g$-entropy, Computational Privacy, Non-linear Optimisation.

\section{Introduction}
We study the setting of functions $f$ that map $n$ integral inputs $x_1,\dots,x_n$ into one integral output. Each input $x_i$ is controlled by some agent $i$ and its value is considered private to agent $i$. The computation of function $f$ is \emph{multi-party secure} if its evaluation protects the privacy of the inputs, so that agent $j$ cannot learn more from this computation about the other values $x_i$ than what agent $j$ is able to infer from knowledge of her own input $x_j$ and the publicly observable output $f(x_1,\dots,x_n)$.

Secure Multi-party Computation (SMC) is a domain of cryptography that can implement such a black-box functionality: it enables protocol participants to compute a public function of their private inputs, such that no trusted third party is required, and that the confidentiality of the inputs is protected \cite{yao1986generate,yao1982protocols,shamir1979share,%
rabin1989verifiable,ben1988completeness,chaum1988multiparty}. 
Recent advances in SMC have given birth to a variety of efficient protocols that achieve computational and information-theoretic security against passive and active adversaries \cite{kolesnikov2008improved,lindell2015efficient,lindell2012secure,%
araki2016high}. 

SMC therefore gives strong security guarantees, but it does allow inferences about other agents' input values based on the publicly observable output and one's own private input. This is referred to as the \emph{acceptable information flow} in the SMC literature, in which this is therefore largely ignored. In fact, this so called acceptable information flow is oblivious to the manner in which a protocol realises the aims of SMC. 

Consequently, such information flows would also occur in the setting of outsourced computation. In this case, a trusted third party or a central authority (e.g. a national health agency) holds some records from different parties (e.g. some medical insurance companies), computes a function of those records and informs the parties of the result of the computation, such that no information leaks about the parties' inputs apart from the public output. 

But we believe that such information flow is not always acceptable, e.g., in the medical domain with its strict privacy regulations. Moreover, we think that it is important to understand and quantify such information flow in order to
\begin{itemize}
\item better understand potential risks of using SMC in a specific application, say, a health-care consortium of insurers and hospitals

\item devise methods that can mitigate or prevent such information flow.
\end{itemize}

\noindent The latter aim contains within it an inherent friction. 
The information flow 
whose existence only depends on knowledge of some private inputs and the public output can neither be mitigated against nor prevented by an SMC protocol that computed that function $f$.
Rather, for such measures to be effective, we will need to modify the actual behaviour of function $f$: we will instead use another function $f'$ for which the acceptable information flow is absent, less pronounced or optimal according to some risk measure. 
The aforementioned friction consists of the need to shield against such undesired information flow for function $f$ by replacing the latter with function $f'$. This substitution naturally introduces some inaccuracy in the value of the computed output, which will need to be controlled. 

The notions of security developed for SMC are not directly helpful in understanding this friction and its inherent trade-off. SMC security neither reflects the amount of information that leaks from a computation once the public output is revealed, nor does it account for the ability of an attacker to influence such leakage before entering a protocol \cite{lindell2009secure,orlandi2011multiparty,%
cramer2015secure,aumann2007security}. We therefore develop, in this paper, bespoke methods for understanding this better.
Specifically:
\begin{enumerate}
\item We generalise a model of such information flow, developed in \cite{ah2017secure}, to an entire family of conditional entropies. This subsumes the Shannon and min entropies as well as the notion of $g$-leakage in Computer Security.

\item We devise a method of distorting  the output of a function $f$ with so called \emph{virtual inputs} so that the distorted function $f'$ may be computed through standard means, such as SMC protocols.

\item We express a trade-off between privacy preservation of agents' inputs and output accuracy as a non-linear optimisation problem, whose solution computes optimal virtual inputs.

\item We demonstrate that these optimisations can be solved for a large class of our conditional entropies, including the aforementioned ones.

\item We also offer theoretical insights that relate and characterise the relationship between the accuracy of the distorted function and the level of privacy that such distortions offer.
\end{enumerate}

This work is motivated by, and applicable to, Secure Multi-Party Computations. Our methods do not rely on the particular protocols used for SMC, but only on the abstract setting of a \emph{black-box} function $f$ to which parties $i$ submit a private input $x_i$ and all then learn the public output of $f$. We will therefore present the core of our technical development in this abstract setting, to stress that these results are orthogonal to the choice of an SMC protocol.

Naturally, the application context of an SMC may constrain or inform our approach. In a voting protocol, e.g., a mere deviation from the original function $f$ would hardly be tolerable. But our approach may be used to enhance the privacy in less restrictive scenarios such as in the computation of statistical measures or financial audits.

\ourparagraph{Outline of Paper} We discuss related work in Section~\ref{section:related}. Needed technical background from information theory is covered in Section~\ref{section:background}. Our development of a generalised conditional entropy is the subject of Section~\ref{section:entropy}. The development of our model for information flow for black-box functions and the attacker's entropy for that are described in Section~\ref{section:model}. The method by which one can randomise black-box functions through virtual inputs is developed in Section~\ref{section:random} and its theory is presented in Section~\ref{section:theory}. Our approach to optimisation of the trade-off between privacy and accuracy of black-box functions, and its contributions, are developed in Section~\ref{section:optimal}. A discussion of our work is contained in Section~\ref{section:discussion} and Section~\ref{section:conclusion} concludes the paper. 

\section{Related Works}
\label{section:related}
\paragraph{Information flow in programs}
Information flow analysis in imperative programs 
has been explored with many different approaches. One of the fundamental concepts is that of
security classes, introduced by Denning \cite{denning1976lattice}, which enables one to classify the variables of a program with respect to their level of confidentiality in order to form a lattice of information. Based on this classification, type systems \cite{volpano1996sound} and semantic approaches \cite{joshi2000semantic} have been implemented in order to define the security of instructions involving such variables. The most basic model considers only two security classes $L$ and $H$ separating the variables with a low and high level of confidentiality respectively \cite{denning1976lattice}. The security of a program is then expressed with the notion of non-interference between both classes \cite{volpano1996sound,smith2007principles,dima2006nondeterministic}. 
However, as programs in practice may contain some interference, other quantitative approaches \cite{yasuoka2014quantitative,clark2007static,clarkson2009quantifying,phan2014quantifying,malacaria2015algebraic,smith2011quantifying} have been proposed in order to measure the information flow that can arise between variables from different security classes.  
The computation of such quantitative information flows also includes the use of probabilistic instructions \cite{mciver2003probabilistic,joshi2000semantic,smith2007principles} that can randomise the algorithms and make programs non-deterministic and thus in some cases protect the confidentiality of information processed by variables in $H$. 

\paragraph{Information flow in SMC}
The security of SMC protocols ensures by definition that the participants can compute the public output of a public function of their private inputs without learning anything about the other parties' input, apart from what is inferable from the so called acceptable information flow we already discussed.

In \cite{ah2017secure}, we introduced a model of deceitful adversaries which enabled us to reason about the acceptable leakage, and to quantify, based on Shannon entropy, the information that such attackers can deduce from public outputs and their own private inputs. 
We also extended our model to a theoretic game that allows an attacker to evaluate the influence that he can have depending on the input he provides to the SMC protocol. 
In this present work, we build on this model to develop an approach that can mitigate or prevent this information leakage. We are able to do this for a large class of conditional entropies, which subsumes both the notions of R\'enyi entropy and $g$-entropy. 
We then introduce the notion of an approximate function, a corresponding non-linear optimisation problem, and we show how solving such optimisation problems can address certain privacy concerns raised in \cite{ah2017secure}.

\paragraph{Differential privacy}
The principle of randomising the output of a computation in order to protect the privacy of the data on which some calculations are performed is related to the concept of Differential Privacy (DP) \cite{dwork2008differential,dwork2014algorithmic}. 
DP formalises privacy concerns and introduces techniques that provide users of a database with the assurance that their personal details will not have a significant impact on the output of the queries performed on the database. 
More precisely, it proposes mechanisms which ensure that the outcome of the queries performed on two databases differing in at most one element will be statistically indistinguishable. 
Thus, DP means to reassure users of a database that their confidential data may be used in statistical measures 
without harming their individual privacy. 
Moreover, minimising the distortion of the outcome of the queries while ensuring privacy is an important trade-off that governs DP. 

Our approach aims at introducing concepts and mechanisms that can reassure participants of an SMC that they can engage in a computation whose public output will not affect their own privacy. Our privacy-enhancing techniques will also consider the accuracy of the randomised computation and they are akin to the Laplace mechanism in DP which blurs queries' outputs with  an additive noise. 
However, our aims and the methods we develop and use are significantly different. 
In DP, an attacker would take advantage of the information gained by repeating the same queries on two neighbouring databases, which would not be realistic in many applications of SMC such as in auctions or in e-voting. 
Instead, we focus on the amount of information on the private inputs that would flow from a single SMC. 
DP also does not model knowledge or belief of attackers whereas we model belief about protocol inputs of parties. 
Moreover, in our attack model a set of agents may collude in order to learn private information.
Adapting DP techniques to SMC would therefore not necessarily be always possible,  appropriate, nor even yield intended privacy guarantees.

\section{Background and Notations}
\label{section:background}

We recall different notions of entropy  used for quantifying information.

\ourparagraph{Notations} 
Let $D$ be a discrete set. We write 
$\cP(D)$ for the power set of $D$, and $\card{D}$ for the cardinality of set $D$.
Let $\Omega(D)$ be the set of all probability distributions whose support is contained in $D$. 
Throughout, we present distributions as Python dictionaries with domain values as keys and associated probabilities as values. For example, $\{4\colon\sfrac{1}{2}, 8\colon\sfrac{1}{2}\}$ represents the uniform distribution over $\{4, 8\}$. 
For any integers $a$ and $b$, we will write $\llrr{a}{b}$ for the set of consecutive integers ranging from $a$ to $b$, namely $\{a, a+1, \cdots, b\}$. The set of positive integers will be denoted by $\Np$ while $\Rp$ and $\Rpe$ will denote the set of non-negative and positive real numbers respectively. 
Let $n$ be in $\Np$. 
A \emph{linear distribution} over $\llrr{1}{n}$ will refer to the triangular distribution with mode $n$, i.e. to the distribution
$\{k \colon \frac{2k}{n(n+1)} \mid 1 \leq k \leq n\}$
where $\frac{2}{n(n+1)}$ is a normalising factor. 
Given random variable $X$ and value $x$, the event ``$X = x$'' will be abbreviated by ``$x$'' when there is no ambiguity, and its probability will be denoted by $p(x)$. 
Similarly, we will abbreviate $\sum_{x \in D}$ by $\sum_x$ when the domain $D$ is obvious from context. 

We denote by $\vv{x_i}_{1 \leq i \leq n}$ the $n$-dimensional vector in $\R^n$ whose coordinates are $x_1, \cdots, x_n$; we abbreviate this by $\vv{x_i}_i$ when there is no ambiguity. 
For all vector $v$ in $\R^n$ and all $p$ in $\Rp$, the usual $p$-norm of $v$ is denoted by $\norm{v}_p$. 
Let $\log$ denote the logarithm in base $2$ and let $\mu$ be the function defined for all non-negative real $x$ as:
\begin{equation}
\label{equ:mu}
\mu(x) = 
  \begin{cases}
  -x\cdot \log(x) & \text{if } x > 0 \\
  0 & \text{if } x = 0 
  \end{cases}
\end{equation}

\ourparagraph{Shannon and min-entropy}
Recall that for two random variables $X$ and $Y$ taking values in $\mathcal{X}$ and $\mathcal{Y}$, respectively, the \emph{Shannon entropy} $\HH_1(X)$ \cite{BLTJ:BLTJ1338} of $X$ and the Shannon entropy $\HH_1(X\mid Y)$ of $X$ given $Y$ are given as:
\begin{eqnarray}
\HH_1(X) = - \sum_x p(x) \log p(x)\label{equ:shannon} &{}&\qquad
\HH_1(X \mid Y) =\sum_y p(y) \HH_1(X \mid y)\label{equ:condshannon}
\end{eqnarray}

\noindent where \( \HH_1(X \mid y) = - \sum_x p(x \mid y) \log p(x \mid y) \). 
On the other hand, the \emph{Bayes vulnerability}  $\V_\infty (X)$ \cite{smith2011quantifying,cachin1997entropy,smith2009foundations}  expresses the probability of guessing a secret in one try. Similarly, the \emph{conditional} Bayes vulnerability \cite{dodis2004fuzzy} $\V_\infty(X \mid Y)$ of $X$ given $Y$ reflects the average probability of guessing the secret $X$ in one try. They are defined as:
\begin{eqnarray}
\V_\infty (X) = \max_x p(x) \label{equ:bayesvul} &{}&\qquad
\V_\infty(X \mid Y) = \sum_y p(y) \V_\infty(X \mid y) \label{equ:condbayesvul}
\end{eqnarray}

\noindent where
\( \V_\infty (X \mid y) = \max_x p(x \mid y) \). 
The 
\emph{min-entropy} of $X$ and \emph{conditional} min-entropy of $X$ given $Y$ are defined as:
\begin{eqnarray*}
\HH_\infty(X) = - \log \V_\infty(X) &{}& \qquad 
\HH_\infty(X \mid Y) = - \log \V_\infty(X \mid Y)
\end{eqnarray*}

		\ourparagraph{R\'enyi Entropy}
A more general notion of entropy, called \emph{R\'enyi entropy} \cite{renyi1961measures}, generalises both notions of Shannon entropy and min-entropy.  For sake of notational convenience, let us first define the \emph{$\alpha$-vulnerability} of $X$, for all positive real $\alpha \neq 1$, as:
\(\V_\alpha(X) = \norm{ \vv{p(x)}_x }_\alpha \)
Using this notion, we may express the \emph{R\'enyi entropy} $\HH_\alpha(X)$ of $X$ as:
\begin{eqnarray*}
\HH_\alpha(X) &=& \frac{\alpha}{1 - \alpha} \log \V_\alpha(X) 
\end{eqnarray*}

It is well-known and easily shown that the R\'enyi entropy $\HH_\alpha(X)$ converges towards the min-entropy $\HH_\infty(X)$ as $\alpha$ tends towards infinity. 
Moreover, an application of L'H\^opital's rule ensures that the R\'enyi entropy $\HH_\alpha(X)$ converges towards the Shannon entropy $\HH_1(X)$ as $\alpha$ tends towards $1$. 

However, although different notions of conditional R\'enyi entropy have been proposed, none of them has yet been commonly accepted as \emph{the} conditional R\'enyi entropy \cite{fehr2014conditional}. 
Yet, one candidate seems to be particularly suitable for our needs: Arimoto's \cite{arimoto1977information} notion of conditional R\'enyi entropy not only satisfies the natural properties of chain rule ($\HH_\alpha(X \mid Y) \geq \HH_\alpha(XY) - \log \card{\mathcal{Y}} $, where $\HH_\alpha(XY)$ denotes the joint entropy of $X$ and $Y$) and monotonicity ($\HH_\alpha(X \mid Y) \leq \HH_\alpha(X)$). But it is also compatible with both the Shannon entropy and the min-entropy in that we have the convergences  
$\HH_\alpha(X \mid Y) \xrightarrow[\alpha \to 1]{} \HH_1(X \mid Y)$ and
$\HH_\alpha(X \mid Y) \xrightarrow[\alpha \to \infty]{} \HH_\infty(X \mid Y)$. 
Therefore, we will introduce and work with the notion of conditional R\'enyi entropy due to Arimoto \cite{arimoto1977information}.
For sake of notational consistency, let us define the $\alpha$-vulnerability of $X$ given $Y$ as:
\(\V_\alpha(X \mid Y) = \sum_y p(y) \V_\alpha(X \mid y)\)
where 
\( \V_\alpha(X \mid y) = \norm{ \vv{p(x \mid y)}_x }_\alpha  \). 
For $\alpha\not=1$, we may now define  the \emph{conditional R\'enyi entropy} of $X$ given $Y$ as:
\begin{eqnarray*}
\HH_\alpha(X \mid Y) &=& \frac{\alpha}{1 - \alpha} \log \V_\alpha(X \mid Y) 
\end{eqnarray*}

		\ourparagraph{$g$-entropy}
The $g$-entropy \cite{m2012measuring}
measures the gain that someone might get from guessing a secret --- in our case, the private inputs of other parties. Since this is a relevant way of measuring risk of privacy violations, we wish that our approach and developed methods also support use of this notion of entropy.

Let $\mathcal{X}$ be the domain of $X$, where random variable $X$ models a \emph{secret}. Let $\mathcal{W}$ be the set of possible guesses for the value of $X$. A function $g$ of type  $\mathcal{W} \times \mathcal{X} \to [0, 1]$ is then called a \emph{gain function}. This function assigns to each guess $w$ in $\mathcal{W}$ and possible value $x$ of the secret in $\mathcal{X}$ a reward $g(w, x)$ that an attacker would gain by guessing $w$ when the secret value actually is $x$. Set $\mathcal{W}$ may be designed so that its elements refer to properties of secrets, values that are ``close'' to the secret or other means of expressing aspects of the secret.

For such a gain function $g$, the \emph{$g$-vulnerability of $X$}, $\V_g(X)$, is the expected reward that an attacker would gain by selecting his best guess. It, and the \emph{conditional} $g$-vulnerability $\V_g(X \mid Y) $ of $X$ given $Y$ are defined as:
\begin{eqnarray}
\V_g(X) = \max_w \sum_x p(x) g(w, x)\label{equ:gainvul} &{}&\qquad
\V_g(X \mid Y) = \sum_y \V_g(X \mid y)\label{equ:condgainvul}
\end{eqnarray}

\noindent where \( \V_g(X \mid y) = \max_w \sum_x p(x \mid y) g(w, x) \).
The \emph{$g$-entropy} and \emph{conditional $g$-entropy} are then defined as follows:
\begin{eqnarray*}
\HH_g(X) = - \log \V_g(X) &{}& \qquad
\HH_g(X \mid Y) = - \log \V_g(X \mid Y)
\end{eqnarray*}

The $g$-entropy generalises the min-entropy: for $\mathcal{W} = \mathcal{X}$ and gain function $id\colon \mathcal{X}\times \mathcal{X}\to [0,1]$~---~where $id(w,x) = 0$ if $w$ is not equal to $x$, and $id(x,x) = 1$ for all $x$ in $\mathcal{X}$, then $\HH_{\id}(X)$ equals $\HH_\infty(X)$ and $\HH_{\id}(X \mid Y)$ equals $\HH_\infty(X \mid Y)$.

\section{Generalised Conditional Entropy}
\label{section:entropy}
To get a very general definition of information leakage in Secure Multi-Party Computations (SMC), 
we define a more general notion of entropy that subsumes both R\'enyi entropy and $g$-entropy. 
For random variables $X$ and $Y$ with finite domain $\mathcal{X}$ and $\mathcal{Y}$, and a finite set $\mathcal{W}$ of possible guesses for $X$, we adapt the existing notions, such as the Bayesian vulnerability, to the presence of a gain function $g$ and its set of guesses $\mathcal{W}$. We indicate that dependency by writing $V_{\alpha,g}$ and so forth, subsequently.
We define properties of gain functions that are pertinent to our technical development.
\begin{definition}
\label{def:g_functions}
Let $g \colon \mathcal{W} \times \mathcal{X} \to [0, 1]$ be a gain function. 
\begin{enumerate}
\item
The gain function $g$ is \emph{positive} iff
\(\forall x \in \mathcal{X} \colon \sum_w g(w, x) > 0\)

\item
Let $\beta$ be in $\Rpe$. The gain function $g$ is $\beta$-positive iff
\(\forall x \in \mathcal{X} \colon \sum_w g(w, x) \geq \beta\)

\item The gain function $g$ is \emph{unitary} iff
\(\forall x \in \mathcal{X} \colon \sum_w g(w, x) = 1\)
\end{enumerate}
\end{definition}

We will only consider positive gain functions: a gain function that is not positive is the constant $0$ function, and can only produce $0$ vulnerabilities~---~as mentioned in \cite{m2012measuring}; $\beta$-positive gain functions will be useful in later sections. Note that, since $\mathcal{X}$ is finite, all positive gain function $g$ have some $\beta >0$ such that $g$ is $\beta$-positive.

Let $g \colon \mathcal{W} \times \mathcal{X} \to [0, 1]$ be a gain function and $0 < \alpha\not=1$.
The $(\alpha, g)$-vulnerability $\V_{\alpha,g} (X)$ of $X$ and the conditional $(\alpha, g)$-vulnerability $\V_{\alpha,g} (X \mid Y)$ of $X$ given $Y$ are defined as:
\begin{eqnarray}
\V_{\alpha,g} (X) \coloneqq \norm{ \vv{ \sum_x p(x) g(w, x) }_w }_\alpha\label{equ:alphavul} &{}&\qquad
\V_{\alpha,g} (X \mid Y) \coloneqq \sum_y p(y) \V_{\alpha,g} (X \mid y)\label{equ:condalphavul} 
\end{eqnarray}

\noindent where 
\(\V_{\alpha,g} (X \mid y) \coloneqq \norm{ \vv{ \sum_x p(x \mid y) g(w, x) }_w }_\alpha\).
We now define the $(\alpha, g)$-entropy of $X$ and the conditional $(\alpha, g)$-entropy of $X$ as:
\begin{eqnarray}
\HH_{\alpha,g}(X) \coloneqq	 
\frac{\alpha}{1-\alpha} \log \V_{\alpha,g}(X) &{}& \qquad
\HH_{\alpha,g}(X \mid Y) \coloneqq	 
\frac{\alpha}{1-\alpha} \log \V_{\alpha,g}(X \mid Y) \label{eq:general_entropy}
\end{eqnarray}

Again, we can easily verify that the $(\alpha, g)$-entropies $\HH_{\alpha,g}(X)$ and $\HH_{\alpha,g}(X \mid Y)$ both converge towards their respective $g$-entropies as $\alpha$ tends towards infinity.  We may thus define:
\begin{eqnarray*}
\HH_{\infty,g}(X) \coloneqq \HH_{g}(X) &{}& \qquad
\HH_{\infty,g}(X \mid Y) \coloneqq \HH_{g}(X \mid Y)
\end{eqnarray*}

We now focus on the case when $\alpha$ tends towards $1$ and we define, where $\mu$ is as in~(\ref{equ:mu}):
\begin{eqnarray*}
\HH_{1,g}(X) \coloneqq	
	\sum_w \mu \left( \sum_x p(x) g(w, x) \right) &{}&\qquad
\HH_{1,g}(X \mid Y) \coloneqq	
	\sum_w \mu \left( \sum_x p(x \mid y) g(w, x) \right)
\end{eqnarray*}

\noindent For \emph{unitary} gain functions $g$, it is easy to see that
the $(\alpha,g)$-entropies $\HH_{\alpha,g}(X)$ and $\HH_{\alpha,g}(X \mid Y)$ converge towards $\HH_{1,g}(X)$ and $\HH_{1,g}(X \mid Y)$, respectively, when $\alpha$ tends towards $1$. 
The reason for this is that,
when $g$ is unitary, the $(\alpha, g)$-vulnerabilities $\V_{\alpha,g}(X)$ and $\V_{\alpha,g}(X \mid Y)$ converge towards $0$ as $\alpha$ tends towards $1$. And then the claimed results follow from the application of L'H\^opital's rule, in a similar fashion as done for R\'enyi entropies. 
Let us formalise this:
\begin{lemma}
\label{lemma:hlim}
\begin{enumerate}
\item
Let $g \colon \mathcal{W} \times \mathcal{X} \to [0, 1]$ be a gain function. 
Then $\HH_{\alpha,g}(X)$ and $\HH_{\alpha,g}(X \mid Y)$ converge for $\alpha\to \infty$:	
\begin{eqnarray*}
\lim_{\alpha \to \infty} \HH_{\alpha,g}(X) = \HH_{\infty,g}(X) &{}&\qquad
\lim_{\alpha \to \infty} \HH_{\alpha,g}(X \mid Y) = \HH_{\infty,g}(X \mid Y)
\end{eqnarray*}

\item Moreover, if $g$ is unitary, then $\HH_{\alpha,g}(X)$ and $\HH_{\alpha,g}(X \mid Y)$ converge when $\alpha$ tends towards $1$, and we then have:
\begin{eqnarray*}
\lim_{\alpha \to 1} \HH_{\alpha,g}(X) = \HH_{1,g}(X) &{}&\qquad
\lim_{\alpha \to 1} \HH_{\alpha,g}(X \mid Y) = \HH_{1,g}(X \mid Y)
\end{eqnarray*}
\end{enumerate}
\end{lemma}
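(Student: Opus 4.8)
The plan is to prove both parts by reducing each limit to a standard application of L'H\^opital's rule, exactly as one does for ordinary R\'enyi entropy. Throughout, fix the distributions of $X$ and $Y$, so that all the quantities $\sum_x p(x) g(w,x)$ and $\sum_x p(x\mid y) g(w,x)$ are fixed non-negative reals, and the norms $\V_{\alpha,g}(X)$, $\V_{\alpha,g}(X\mid y)$ are continuous functions of $\alpha$ on $(0,\infty)$ (being finite sums of powers). Since the conditional statements follow from the unconditional ones applied to each conditional distribution $X\mid y$ and then summed against the finitely many weights $p(y)$ — using that a finite convex combination of convergent sequences converges to the convex combination of the limits — it suffices to treat $\V_{\alpha,g}(X)$ and $\HH_{\alpha,g}(X)$; I would state this reduction once and then work only with the unconditional case.

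For part 1, write $a_w \coloneqq \sum_x p(x) g(w,x) \geq 0$ and $M \coloneqq \max_w a_w = \V_g(X) = \V_{\infty,g}(X)$, which is positive because $g$ is positive (and $\mathcal W$, $\mathcal X$ finite). Then $\V_{\alpha,g}(X) = \bigl(\sum_w a_w^\alpha\bigr)^{1/\alpha}$, and the standard fact that the $\alpha$-norm of a fixed finite tuple tends to its $\infty$-norm as $\alpha\to\infty$ gives $\V_{\alpha,g}(X)\to M$. Consequently $\log\V_{\alpha,g}(X)\to \log M$ is bounded, while $\tfrac{\alpha}{1-\alpha}\to -1$, so $\HH_{\alpha,g}(X) = \tfrac{\alpha}{1-\alpha}\log\V_{\alpha,g}(X) \to -\log M = \HH_{\infty,g}(X)$. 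This part is essentially immediate.

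For part 2, assume $g$ unitary, i.e. $\sum_w a_w = 1$ (and similarly $\sum_w a_w = 1$ for each conditional tuple — this is where unitarity is used). The key observation is that $\V_{\alpha,g}(X) = \bigl(\sum_w a_w^\alpha\bigr)^{1/\alpha}$ and as $\alpha\to 1$ the inner sum $\sum_w a_w^\alpha \to \sum_w a_w = 1$, so $\V_{\alpha,g}(X)\to 1$ and hence $\log\V_{\alpha,g}(X)\to 0$, i.e. the expression $\tfrac{\alpha}{1-\alpha}\log\V_{\alpha,g}(X)$ is indeed a $0/0$ (really $\infty\cdot 0$) indeterminate form. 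I would rewrite it as
\[
\HH_{\alpha,g}(X) = \frac{\alpha}{1-\alpha}\cdot\frac{1}{\alpha}\log\Bigl(\sum_w a_w^\alpha\Bigr) = \frac{\log\bigl(\sum_w a_w^\alpha\bigr)}{1-\alpha},
\]
a genuine $0/0$ form in $\alpha$ at $\alpha=1$. Applying L'H\^opital's rule: the denominator has derivative $-1$, and the numerator, using $\tfrac{d}{d\alpha} a_w^\alpha = a_w^\alpha \ln a_w$ (for $a_w>0$; terms with $a_w=0$ contribute nothing and can be dropped), has derivative $\bigl(\sum_w a_w^\alpha \ln a_w\bigr)/\bigl(\sum_w a_w^\alpha \ln 2\bigr)$. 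Evaluating at $\alpha=1$ and using $\sum_w a_w = 1$ gives the limit $-\bigl(\sum_w a_w \ln a_w\bigr)/\ln 2 = -\sum_w a_w\log a_w = \sum_w \mu(a_w) = \HH_{1,g}(X)$, as claimed; the conditional version follows by the reduction above.

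The main obstacle is bookkeeping rather than mathematics: one must be careful that $a_w = 0$ is allowed for some $w$ (so $\ln a_w$ is undefined there), which is handled by noting such terms vanish identically in $\sum_w a_w^\alpha$ for all $\alpha>0$ and so may simply be omitted before differentiating; and one must confirm the function $\alpha\mapsto\sum_w a_w^\alpha$ is differentiable near $\alpha=1$ with non-vanishing value there (it equals $1$ at $\alpha=1$), so that L'H\^opital applies and the $\log$ composition is smooth. It is also worth remarking explicitly why unitarity is needed: without it $\sum_w a_w^\alpha \to \sum_w a_w \neq 1$ in general, so $\log\V_{\alpha,g}(X)$ need not tend to $0$ and the $\alpha\to1$ limit of $\HH_{\alpha,g}(X)$ would diverge — which is precisely the sentence preceding the lemma in the text.
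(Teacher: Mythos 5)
Your unconditional arguments are fine and follow the route the paper intends (the $\alpha$-norm limit for $\alpha\to\infty$, L'H\^opital for $\alpha\to 1$), but the one-line reduction of the conditional statements to the unconditional ones contains a genuine gap, specifically for item 2. The paper defines $\HH_{\alpha,g}(X\mid Y)=\frac{\alpha}{1-\alpha}\log \V_{\alpha,g}(X\mid Y)$ with $\V_{\alpha,g}(X\mid Y)=\sum_y p(y)\,\V_{\alpha,g}(X\mid y)$: the average over $y$ sits \emph{inside} the logarithm, so $\HH_{\alpha,g}(X\mid Y)$ is not the convex combination $\sum_y p(y)\HH_{\alpha,g}(X\mid y)$, and the principle ``a convex combination of convergent sequences converges to the convex combination of the limits'' does not apply at the level of entropies. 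For item 1 this is harmless, because your argument can be read at the level of vulnerabilities: $\V_{\alpha,g}(X\mid y)\to\V_g(X\mid y)$ for each of the finitely many $y$, hence $\V_{\alpha,g}(X\mid Y)\to\V_g(X\mid Y)>0$, and then $\frac{\alpha}{1-\alpha}\to-1$ with $\log$ continuous at a positive limit. For item 2, however, the vulnerability-level limit only gives $\V_{\alpha,g}(X\mid Y)\to 1$, so the whole conditional expression is again an indeterminate form, and the per-$y$ limits of $\HH_{\alpha,g}(X\mid y)$ cannot simply be averaged because of the outer logarithm.

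To close the gap you must run the $\alpha\to1$ analysis on the full conditional expression, e.g. write $\HH_{\alpha,g}(X\mid Y)=\log\bigl(\sum_y p(y)\,S_y(\alpha)^{1/\alpha}\bigr)\big/\tfrac{1-\alpha}{\alpha}$ with $S_y(\alpha)=\sum_w a_{w,y}^{\alpha}$ and $a_{w,y}=\sum_x p(x\mid y)g(w,x)$, and apply L'H\^opital there, or equivalently use the first-order expansion $S_y(\alpha)^{1/\alpha}=1+(\alpha-1)\sum_w a_{w,y}\ln a_{w,y}+o(\alpha-1)$, which is valid because unitarity gives $S_y(1)=1$ for every $y$ (zero terms $a_{w,y}=0$ being dropped as you do in the unconditional case). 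This yields $\lim_{\alpha\to1}\HH_{\alpha,g}(X\mid Y)=\sum_y p(y)\sum_w\mu(a_{w,y})$, i.e. the intended $\HH_{1,g}(X\mid Y)$ (the paper's displayed definition of $\HH_{1,g}(X\mid Y)$ is in fact missing the outer $\sum_y p(y)$). The limit does coincide with the convex combination of the per-$y$ limits, but that coincidence is precisely what needs proving; as written, your proposal asserts it rather than proves it. Everything else --- the $a_w=0$ bookkeeping, the $\alpha\to\infty$ norm limit, and the unconditional L'H\^opital computation --- is correct and in the same spirit as the paper's (sketched) argument.
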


When the gain function $g$ is $\id$ with $\mathcal{W} = \mathcal{X}$ as above,  we obtain that for all positive reals $\alpha$, the $(\alpha, \id)$-entropies agree with the R\'enyi entropies:
\begin{eqnarray*}
\HH_{\alpha, \id}(X) = \HH_\alpha(X) &{}&
\qquad
\HH_{\alpha, \id}(X \mid Y) = \HH_\alpha(X \mid Y)
\end{eqnarray*}

\noindent This result is immediate for all values of $\alpha$ different from $1$. When $\alpha$ is equal to $1$, this follows from the fact that $\id$ is a unitary gain function and that we can apply the previous result ensuring that when $\alpha$ tends towards $1$, the $(\alpha, \id)$-entropies $\HH_{\alpha, \id}(X)$ and $\HH_{\alpha, \id}(X\mid Y)$, respectively, converge towards $\HH_{1, \id}(X)$ (the Shannon entropy) and $\HH_{1, \id}(X \mid Y)$ (the conditional Shannon entropy), respectively. We summarise those results and our discussion in Figure \ref{fig:arrayRenyi}.
\begin{figure}
\centering
\begin{tabular}{|l|l|l|l|} 
\hline
\diagbox[width=2.5cm, height=0.6cm]{$g$}{$\alpha$} & $\alpha = 1$ & $\alpha = \infty$ & $\alpha \in \left]0, \infty \right]$\\ \hline
$g = \id$ & Shannon entropy & min-entropy & R\'enyi entropy \\ \hline
$g \in [0, 1]^{\cW \times \cX}$ & {\footnotesize{only for unitary $g$}} & $g$-leakage & \\ \hline
\end{tabular}
\caption{Summary of the different notions of entropy that our generalised measure of information flow $\HH_{\alpha,g}$ subsumes. }
\label{fig:arrayRenyi}
\end{figure}

In conclusion, our new notion of entropy subsumes both the $g$-entropy and the whole family of R\'enyi entropies, including the Shannon entropy and the min-entropy. 
Therefore, all results that we develop in this paper will also be valid for all the different entropies mentioned earlier.

\section{Information Flow for Secure Multi-Party Computation}
\label{section:model}

\subsection{Model for Information Flow}
	\label{sec:smcflow}
Let us recall the technical setting and the assumptions introduced in \cite{ah2017secure} for studying and quantifying the information flow produced by public outputs in SMC, as this constitutes a basis for the remaining technical developments in this paper.
Throughout this paper, we consider a set of $n> 1$ parties $\bP = \{P_1, \cdots, P_n\}$ holding the respective inputs $x_1, \cdots, x_n$, each of them belonging to $\Z$. 
Let $f\colon \Z^n \to \Z$ be a function. 
Let $o$ denote the output of the function applied with the parties' inputs, i.e. $o = f(x_1, \cdots, x_n)$. Both $o$ and $f$ are public and so known to all parties in $\bP$.
In order to study the aforementioned acceptable information leakage of this situation, we introduce the following model. 
Let $\bA$ and $\bT$ be two non-empty subsets of $\bP$ and $\bS$ be a possibly empty subset of $\bP$ such that $(\bA, \bT, \bS)$ forms a partition of $\bP$. 
Our attack models assumes that 
all parties in $\bA$ are willing to collaborate between each other in order to maximise information leakage on inputs of the parties in $\bT$. 
The sets $\bA$, $\bT$ and $\bS$ will thus respectively be referred to as the sets of \emph{attackers, targets and spectators}, respectively. 
We now define the attackers' input $\xA = \vv{x_i}_{i \in \bA}$, the targets' input $\xT = \vv{x_i}_{i \in \bT}$ and the spectators' input $\xS = \vv{x_i}_{i \in \bS}$. By abuse of notation (or a reordering of arguments for $f$), we will also refer to the output specification of $f$ as $o = f(\xA, \xT, \xS)$. 

Let $a = \card{\bA}$, $t = \card{\bT}$ and $s = \card{\bS}$ denote the cardinality of the respective sets.
Let $\DA$ be an element of $ \mathcal{P}(\Z)^a$ and let us assume that the input vector of the parties in $\bA$ is ranged in $\DA$. 
Similarly, let $\DT$ in $\mathcal{P}(\Z)^t$ and $\DS$ in $\mathcal{P}(\Z)^s$ be the domain of the input vectors of the parties in $\bT$ and $\bS$ respectively. 
In other words, we assume that:
\[
\xA \in \DA, \qquad \xT \in \DT, \qquad \xS \in \DS
\]

\noindent However, as those inputs are private, their exact value is not known to the other parties. 
In order to quantify the information leaks that output $o$ produces, we model the parties' inputs as random variables $\XA$, $\XT$ and $\XS$ respectively, following the respective probability distributions:
\[
\pA \in \Omega(\DA), \qquad \pT \in \Omega(\DT), \qquad \pS \in \Omega(\DS)
\]

\noindent where $\Omega(X)$ is the set of discrete probability distributions over a finite set $X$. These probability distributions will model the beliefs that each set of parties has on the other parties' inputs. More precisely, the parties in $\bA$ and $\bT$ believe that random variable $\XS$ is governed by $\pS$, the parties in $\bA$ and $\bS$ believe that $\XT$ follows $\pT$, whereas the parties in $\bT$ and $\bS$ believe that $\XA$ follows $\pA$. We articulate the assumptions we make about these distributions:
\begin{assumption}
\label{ass:beliefs}
We assume that the parties' beliefs $\pA$, $\pT$ and $\pS$ are public and are part of the common knowledge amongst all parties in $\bP$.
Moreover, our model assumes that the three groups of parties will not collaborate between each other and that their inputs are thus independent. 
\end{assumption}

The independence of $\XA$, $\XT$ and $\XS$ will play an important role in the proofs of the Theorems in Section \ref{section:theory}. 
The assumption that their probability distributions are public and part of the common knowledge ensures that all the parties will be able to access the same data produced by our measure of information flow in Section \ref{sec:awaes} and Section \ref{section:random}, and will be able to reach a consensus regarding how to best protect the targeted inputs' privacy, as discussed in Section \ref{section:optimal}. 
These probability distributions can express a variety of beliefs from uniform to point mass distributions. 

Lastly, let $\DO$ in $\mathcal{P}(\Z)$ be the output domain, defined as $\DO = \{ f(\xA, \xT, \xS)\mid \xA\in\DA, \xT\in\DT, \xS\in \DS\}$. As a function of random variables, the output $o = f(\xA, \xT, \xS)$ will therefore be modelled by the random variable:
\begin{equation}
\label{equ:Of}
O_f = f(\XA, \XT, \XS)
\end{equation}

\noindent ranged in $\DO$. We sometimes write $O$ when $f$ is clear from context.
In order to quantify the information that the attackers would learn about $\XT$ when inputting a particular input $\xA$, we introduced in \cite{ah2017secure} the attackers' weighted average entropy $\awae{\bT}{\bA}$ defined for all $\xA$ in $\DA$ as the conditional \emph{Shannon} entropy of $\XT$ given $O$ and $\xA$, i.e.:
\begin{equation}
\label{equ:awaeShannon}
\awae{\bT}{\bA}(\xA) = \sum_o p(o \mid \xA) \sum_{\xT} \mu( p(\xT \mid o, \xA) )
\end{equation}
where $\mu$ was defined in~(\ref{equ:mu}).

A deceitful attacker, i.e. an attacker who is willing to lie on his honest and intended input in order to learn more information on the private inputs of his targets, will now be able to take advantage of this indicator in~(\ref{equ:awaeShannon}) in order to shape his input so as to maximise his information gain. 
Since the notion of $\awae{\bT}{\bA}$ in (\ref{equ:awaeShannon}) is an instance of the conditional Shannon entropy, we need to widen the approach and analyses of~\cite{ah2017secure} to make them compatible with more general notions of entropy. We develop this next.

\subsection{General Attackers' Entropy}
\label{sec:awaes}
Function $\awae{\bT}{\bA}$ for measuring information leakage is dependent on some implicit parameters, namely the SMC function $f$, the partition $(\bA, \bT, \bS)$ of $\bP$ and the distributions $\pT$ and $\pS$ of the targets and spectators' inputs. 
Our technical development
needs to make those parameters explicit, and it needs to work for the generalised entropy notion presented in Section \ref{section:entropy}. 
Therefore, we now define a
higher-order function $\Awae$ which fulfils those requirements. 
Subsequently, we will work with a 
set of allowable guesses $\cW$ for the targeted input $\xT$. 
\begin{definition}
\label{def:awaes}
Let $\alpha$ be in $\Rpe \cup \{ \infty \}$ and $g\colon \cW \times \DT \to [0,1]$ be a gain function. 
We introduce the higher-order function $\Awae_{\alpha,g}$ of type:
\[\Awae_{\alpha,g} \colon (\Z^n\to \Z) \times {\cP(\bP)}^3 \times \Omega(\Z)^2 \to (\DA\to \R^+)\]

\noindent that takes as arguments an SMC function $f$ of type $\Z^n\to \Z$, three disjoint sets of participants $(\bA, \bT, \bS)$ that form a partition of $\bP$, the probability distribution $(\pT, \pS)$ of the respective targets' and spectators' inputs, and returns a function $\Awae_{\alpha,g}(f, (\bA, \bT, \bS), (\pT, \pS))$ of type $\DA \to \Rp$, denoted as $\awae{\alpha,g}{f}$ and defined for all $\xA$ in $\DA$ as the conditional $(\alpha, g)$-entropy of $\XT$ given $O_f$ as in~(\ref{equ:Of}) and $\xA$:
\begin{eqnarray*}
\awae{\alpha,g}{f}(\xA) &=&
\HH_{\alpha,g}(\XT \mid O_f, \xA) 
\end{eqnarray*}

\end{definition}

For subsequent theorems and proofs, we note that for $0< \alpha\not=1$ we have:
\begin{equation}
\label{equ:positivealphafact}
\awae{\alpha, g}{f}(\xA)
= \frac{\alpha}{1-\alpha} \cdot \log \V_{\alpha,g}(\XT \mid O_f, \xA)
\end{equation}

\noindent where the $(\alpha, g)$-vulnerability $\V_{\alpha,g}(\XT \mid O_f, \xA)$ can be written as:
\begin{eqnarray}
\V_{\alpha,g}(\XT \mid O, \xA)
&=&
\sum_o p(o \mid \xA) \cdot \norm{ \vv{ \sum_\xT p(\xT \mid o, \xA) \cdot g(w, \xT) }_w }_\alpha \\
&=& 
\sum_o \norm{ \vv{ \sum_\xT p(\xT) \cdot p(o \mid \xT, \xA) \cdot g(w, \xT) }_w }_\alpha\label{equ:Vag}
\end{eqnarray}

This is so since $\alpha$-norm is homogeneous, even for $\alpha < 1$, and since $\xA$ and $\xT$ are independent random variables. 

This new function $\awae{\alpha, g}{f}$ provides us with a generic way of measuring information flow. 
Indeed, it subsumes some notions of entropy that are widely used in cryptography. For example, when $g$ equals $\id$, this function corresponds to the conditional R\'enyi entropy. When $\alpha$ equals $\infty$, it corresponds to the conditional $g$-entropy. 
We also observe that when $\alpha$ equals $1$ and $g$ equals $\id$, our new function $\awae{1, \id}{f}$ is identical to the function $\awae{\bT}{\bA}$ introduced in \cite{ah2017secure}.

We now illustrate how our general measure of information flow in Secure Multi-Party Computations enables us to quantify the information that attackers can gain on their targets' inputs. 
In doing so, we also raise interesting concerns that will further motivate our present work. Let us consider an example.
\begin{example}
\label{ex:basic_awae}
Let us consider $3$ parties $X$, $Y$ and $Z$ holding the respective inputs $x$, $y$ and $z$, and where $\bA = \{X\}$ is attacking $\bT = \{Y\}$ under spectator $\bS = \{Z\}$. 
Let $\DA = \DT = \DS = \llrr{1}{30}$ and let us assume that $\XT$ and $\XS$ are uniformly distributed over this domain. 
Let $f \colon \Z^3 \to \Z$ be defined by $f(x, y, z) = x(2y + z) + 2z$. 

In this example, we will study the behaviour of the conditional min-entropy of the targeted inputs. 
In other words, we will instantiate $\alpha$ with $\infty$ and $g$ with $\id$ in order to study the function $\awae{\infty,\id}{f}$ which we plot in Figure \ref{fig:basic_awae}. This plot clearly shows that some values of $\xA$ are more advantageous for attacker $X$ in that they produce lower conditional entropies for his targeted input $Y$. 
For instance, inputting $x=2$ would produce a high entropy and would not reveal much information about $y$. 
In contrast, input $x=15$ would produce entropy $0$, which means that $X$ would learn the exact value of $y$ from the output. 
Indeed, as $X$ knows his own input, he knows that in this case, the output equals $o = f(15, y, z) = 30y + 17z$. 
We can check that for all $z$ in $\DS$ the function $f_z \colon y \mapsto f(15, y, z)$ is bijective from $\DT$ to $f_z(\DT)$ as both sets have size $30$. 
This thus ensures that attacker $X$ can deduce the exact value of $y$ when learning the output value. 
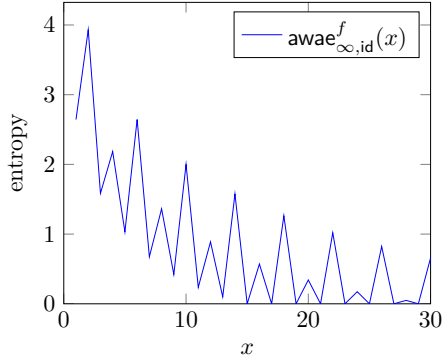
\begin{figure}
\centering
\begin{tikzpicture}[scale=.9]
	\begin{axis}[
	scale=1,
	 ymin=0,
	 xmin=0,
	 xmax=30,
	  xlabel=$x$,
	  ylabel=entropy,
	  legend pos=north east]
	\addplot[blue] table [y=fx, x=x, mark=none]{minBounds.dat};
	\addlegendentry{$\awae{\infty, \id}{f}(x)$}
\end{axis}
\end{tikzpicture}
\caption{Illustration of $\awae{\infty,\id}{f}$ in the computation of function $f(x, y, z) = x(2y + z) + 2z$ with $\pT$ and $\pS$ uniform over $\llrr{1}{30}$, when $X$ attacks $Y$ under spectator $Z$. 
}
\label{fig:basic_awae}
\end{figure}
\end{example}

We just saw that the choice of the attackers' input $\xA$ can have a dramatic influence on the entropy of the targeted input $\xT$. 
In particular, the attackers can harm the privacy of their targets by choosing some judicious inputs $\xA$. 
In order to mitigate against this privacy concern, we next introduce and study the notion of \emph{approximate SMC computation}.

\section{Function Randomisation Via Virtual Inputs}
\label{section:random}

We now consider the case where revealing the exact value of the output of $f$, namely $o = f(\xA, \xT, \xS)$, would be likely to jeopardise the privacy of the targeted input $\xT$. Thus, we would like to be able to replace the computation of $f$ by the computation of an approximate function $f'$, whose output should not only be a decent indicator of $o$, but should also enhance the privacy of $\bT$'s input. 
This presents an inherent trade-off between the accuracy of the output and the privacy of the inputs. We will understand this trade-off in detail in Section \ref{section:optimal}. 

In order to randomise the observed output, 
the function $f'$ will take an additional argument $\phi$, that may consist in a number of integral inputs, and that will act as a source of randomness that can distort the output to protect privacy of targeted inputs. 
Let us next formalise this notion of approximate function. 
\begin{definition}
\label{def:approximate}
Let $n$ be in $\Np$, $v$ be in $\N$ and $f \colon \Z^n \to \Z$ be an $n$-ary function. 
\begin{enumerate}

\item Function $f' \colon \Z^n \times \Z^v \to \Z$ is an \emph{approximation} of $f$ or that $f'$ \emph{approximates} $f$ iff there exists a function $h \colon \Z \times \Z^v \to \Z$ such that:
\begin{equation}
\forall \xx \in \Z^n, 
\forall \phi \in \Z^v \colon
f'(\xx, \phi) = h(f(\xx), \phi)
\end{equation}

\item An approximation $f'$ of $f$ is a \emph{close approximation} of $f$ --- or $f'$ \emph{closely approximates} $f$ --- iff for all $\phi$ in $\Z^v$, the function $h_{\phi} \colon \Z \to \Z$ is injective, where $h_\phi$ is defined for all $\phi$ in $\Z^v$ as
\(\forall o \in \Z \colon
h_{\phi}(o) = h(o, \phi)\).

\item We define $\tid{f} \colon \Z^{n+1} \to \Z$, the \emph{additive approximation} of $f$, for all $\xx$ in $\Z^n$ and $\phi$ in $\Z$ as 
\(\tid{f}(\xx, \phi) = f(\xx) + \phi.\)
\end{enumerate}

\end{definition}

We illustrate the notion of approximate function $f'$ for a function $f$ in Figure \ref{fig:bb_model}.  Function $f'$ has all inputs of $f$ and additional virtual inputs $\varphi$; and its black box contains ``internal wirings'' so that $\varphi$ and the output $o$ of $f$ are fed into function $h$ within that black box to produce approximate output $o'$. A \emph{close} approximation $f'$ of $f$ requires all the functions $h_\phi$ to be injective, which makes sense for SMC as it enforces a correlation between the output of $f$ and that of its approximation $f'$. 
Indeed, knowledge of $o'$ and $\phi$ determine that of $o$, which prevents $o'$ to be independent from $o$. 
We also note that the \emph{additive} approximation $\tid{f}$ of a function $f$ closely approximates the latter. 
\begin{figure}
\centering
\includegraphics[scale=1]{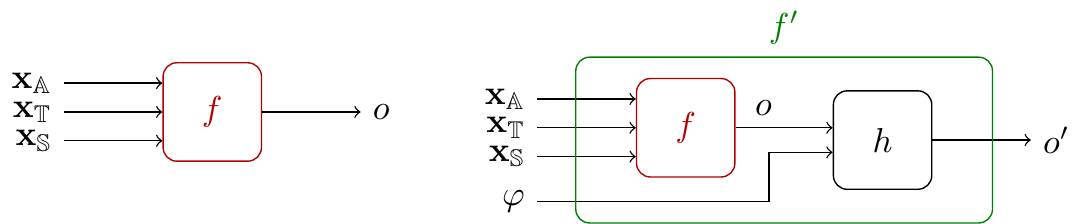}
\caption{Comparison of the black-box model for SMC of function $f$ (left) with that of its approximation $f'$ (right), as introduced in Definition \ref{def:approximate}. The virtual inputs $\varphi$ and the output $o$ of $f$ are fed into function $h$ within the black box to produce approximate output $o'$.}
\label{fig:bb_model}
\end{figure}

The use of a substitute function $f'$ aims to contain and limit the information that would flow from the computation of $f$ by randomising the output of $f$ with an additional variable $\phi$. 
Therefore, we need to understand and quantify the information flow that the computation of such an approximate function $f'$ produces, and we need to study and represent the behaviour of the additional variable $\phi$ that $f'$ uses to randomise the output of $f$. To ensure the security of such approximations, variable $\phi$ is not held by any physical party, it is a \emph{virtual input}, a concept we formalise next. 

\begin{definition}
Let $n$ and $v$ be in $\Np$. 
A $v$-dimensional \emph{virtual input} $\phi$ is a vector in $\Z^v$, \emph{independent from the other inputs}, and not held by any party in $\bP$. 
As such, its value $\phi$ is kept secret and appears to all the parties in $\bP$ as a random variable $\Phi$ on domain $\DP$ following a probability distribution $\pP$, referred to as the \emph{virtual distribution}. 
A set of \emph{virtual parties} $\bV$ is deemed to be the (virtual) owner of $\phi$. 
\end{definition}

\noindent In other words, the probability distribution $\pP$ can be regarded as the prior belief that all the parties in $\bP$ have on input $\phi$. 
Note that all those parties in $\bP$ will have the same public prior belief on $\phi$, in accordance with Assumption \ref{ass:beliefs}, and that $\bP$ and $\bV$ are mutually disjoint. 

The set of parties $\bP'$ for function $f'$ is
\(\bP' = \bP \cup \bV \). 
We now study the privacy that targeted parties gain when the computation of a function $f$ is substituted for that of an approximation $f'$, randomised by a virtual input $\phi$. 
\begin{definition}
Let $n> 1$ and $v$ in $\Np$. 
Let $f \colon \Z^n \to \Z$ be a function and let $f' \colon \Z^n \times \Z^v \to \Z$ approximate $f$. 
Let a virtual input $\phi$ be in $\Z^v$ and let $\pP$ be its probability distribution. 
Finally, let $\alpha$ be in $\R \cup \{ \infty \}$ and $g$ be a gain function of type $\cW \times \DT\to [0,1]$. 
Using the joint probability distribution defined by $(\pS \cdot \pP)(\xS, \phi) \coloneqq \pS(\xS) \cdot \pP(\phi)$ for all $\xS$ in $\DS$ and $\phi$ in $\DP$, function $\awae{\alpha, g}{f', \pP} \colon \DA \to \Rp$ is given as:
\begin{equation}
\awae{\alpha, g}{f', \pP} \coloneqq \Awae_{\alpha,g}(f', (\bA, \bT, \bS \cup \bV), (\pT, \pS \cdot \pP))
\end{equation}

\end{definition}

This function $\awae{\alpha, g}{f', \pP}$ measures the privacy of the targets, given a certain approximate function and virtual input distribution. 
It will be particularly useful, for studying how privacy changes for
different virtual input distributions. 
The assumption that for $f'$ and $f$, the sets $\bA$ and $\bT$ are unchanged, does not compromise the security of our approach: an attacker for function $f'$ could not really learn anything useful about the input of parties in $\bV$, since these inputs are randomly drawn according to $\pi_{\Phi}$. 
Let us illustrate the benefits offered by function substitution. 
\begin{example}
\label{ex:rand_awae}
Let us re-consider the scenario of Example~\ref{example:basic}, but now with 
the additive approximation $\tid{f}$ of $f$. 
We will study the behaviour of the conditional min-entropy of the targeted inputs when we approximate $f$ with $\tid{f}$. In other words, we will study the function $\awae{\infty, \id}{\tid{f}, \pP_i}$ for the following distributions $\pP_i$:
\begin{eqnarray*}
\pP_1 &=& \{-2\colon \sfrac{1}{4}, 0\colon\sfrac{1}{4}, 2\colon\sfrac{1}{4}, 4\colon\sfrac{1}{4}\} \\
\pP_2 &=& \{-1\colon \sfrac{1}{4}, 0\colon\sfrac{1}{4}, 1\colon\sfrac{1}{4}, 2\colon\sfrac{1}{4}\} \\
\pP_3 &=& \{-3\colon \sfrac{1}{8}, -2\colon \sfrac{1}{8}, -1\colon\sfrac{1}{8}, 0\colon \sfrac{1}{4}, 1\colon\sfrac{1}{8}, 2\colon \sfrac{1}{8}, 3\colon\sfrac{1}{8}\}
\end{eqnarray*}

As seen in Figure~\ref{fig:rand_awae}, for all $1\leq i\leq 3$, function $\awae{\infty, \id}{\tid{f}, \pP_i}$ is above $\awae{\infty, \id}{f}$.  This suggests that randomising a computation effectively enhances the privacy of the targeted inputs. 
\begin{figure}
\centering
\begin{tikzpicture}[scale=.9]
	\begin{axis}[
	scale=1,
	 ymin=0,
	 xmin=0,
	 xmax=30,
	  xlabel=$\xA$,
	  ylabel=entropy,
	  legend pos=outer north east]
	\addplot[blue] table [y=fx, x=x, mark=none]{minBounds.dat};
	\addlegendentry{$\awae{\infty, \id}{f}(\xA)$}
	\addplot[green!60!black, densely dotted, very thick] table [y=p1, x=x, mark=none]{minBounds.dat};
	\addlegendentry{$\awae{\infty, \id}{\tid{f}, \pP_1}(\xA)$}
	\addplot[orange, loosely dotted, ultra thick] table [y=p2, x=x, mark=none]{minBounds.dat};
	\addlegendentry{$\awae{\infty, \id}{\tid{f}, \pP_2}(\xA)$}
	\addplot[violet, dashed, thick] table [y=p3, x=x, mark=none]{minBounds.dat};
	\addlegendentry{$\awae{\infty, \id}{\tid{f}, \pP_3}(\xA)$}
\end{axis}
\end{tikzpicture}
\caption{Benefits of substituting the computation of $f$ by that of its approximation $\protect\tid{f}$ in the computation of $f(x, y, z) = x(2y + z) + 2z$ with $\pT$ and $\pS$ uniform over $\llrr{1}{30}$. For all $1\leq i\leq 3$, the function $\awae{\infty, \id}{\protect\tid{f},\pP_i}$ is above $\awae{\infty, \id}{f}$. 
}
\label{fig:rand_awae}
\end{figure}
\end{example}

The latter example indicates that function randomisation indeed contributes to improving the privacy of the targets. 
In the next section, we want to formally investigate the privacy gains offered by function randomisation. 
In particular, we would like to understand why substituting the computation of a function $f$ by that of an approximation $f'$ can only enhance the privacy of the targets, and we will further characterise this privacy gain for \emph{close} approximations.

\section{Theory of Virtual Input Randomisation}
\label{section:theory}

We first summarise the mathematical setting we study in the remainder of this paper:
\begin{assumption}
In the remainder of this paper, including lemmas and theorems,
$f'$ is an approximation of $f$, where $\phi$ is a virtual input with domain $\DP$. Moreover, $g\colon \cW \times \DT \to [0,1]$ is a positive gain function, and 
$\beta$ is a positive real. 
\end{assumption}

The following theorem states that the computation of any approximate function $f'$ will not produce a lower privacy for the targeted inputs than that produced by the computation of $f$. 
\begin{theorem}
\label{thm:lowerbound}
Let $\alpha$ be a positive real different from $1$. 
Then, we have:
\begin{equation}
\forall \pP \in \Omega(\DP), 
\forall \xA \in \DA \colon
\awae{\alpha, g}{f', \pP}(\xA) \geq 
\awae{\alpha, g}{f}(\xA)
\end{equation}
\end{theorem}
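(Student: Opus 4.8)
The plan is to reduce everything to the closed form~(\ref{equ:positivealphafact})--(\ref{equ:Vag}) for the $(\alpha,g)$-vulnerability and to recognise the channel induced by $f'$ as a \emph{post-processing} of the one induced by $f$; Minkowski's inequality for the $\alpha$-norm (in its ordinary form for $\alpha>1$ and its reverse form for $0<\alpha<1$) then does the work. First I would fix $\xA\in\DA$ and set $O=O_f$, $O'=O_{f'}=h(O,\Phi)$. For an output value $o$ and a guess $w$, write $F(o,w)=\sum_{\xT}p(\xT)\,p(o\mid \xT,\xA)\,g(w,\xT)$, so that~(\ref{equ:Vag}) reads $\V_{\alpha,g}(\XT\mid O,\xA)=\sum_o\norm{\vv{F(o,w)}_w}_\alpha$; let $F'(o',w)$ be the analogous quantity for $f'$, computed with spectator set $\bS\cup\bV$ and distribution $\pS\cdot\pP$. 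Since $\Phi$ is a virtual input it is independent of $(\XA,\XT,\XS)$, hence of $O$, so $p(o'\mid \xT,\xA)=\sum_\phi \pP(\phi)\sum_{o\in h_\phi^{-1}(o')}p(o\mid\xT,\xA)$, where $h_\phi^{-1}(o')=\{o : h(o,\phi)=o'\}$; substituting and interchanging summations gives the key identity
\[
F'(o',w)\;=\;\sum_{\phi}\pP(\phi)\sum_{o\in h_\phi^{-1}(o')}F(o,w).
\]
The bookkeeping point is that, as $o'$ ranges over the output domain $\DOp$ of $f'$, the pairs $(\phi,o)\in\DP\times\DO$ occurring on the right are partitioned by the value $o'=h(o,\phi)$: each pair contributes to exactly one $o'$.

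Next I would split on the sign of $\tfrac{\alpha}{1-\alpha}$. If $\alpha>1$, then $\norm{\cdot}_\alpha$ is a norm, so by the triangle inequality and homogeneity ($\pP(\phi)\geq 0$),
\[
\sum_{o'}\norm{\vv{F'(o',w)}_w}_\alpha
\;\leq\;\sum_{o'}\sum_{(\phi,o):\,h(o,\phi)=o'}\pP(\phi)\,\norm{\vv{F(o,w)}_w}_\alpha
\;=\;\sum_{\phi}\pP(\phi)\sum_{o}\norm{\vv{F(o,w)}_w}_\alpha
\;=\;\sum_{o}\norm{\vv{F(o,w)}_w}_\alpha,
\]
using $\sum_\phi\pP(\phi)=1$; thus $\V_{\alpha,g}(\XT\mid O',\xA)\leq\V_{\alpha,g}(\XT\mid O,\xA)$, and applying $\tfrac{\alpha}{1-\alpha}\log(\cdot)$, which is \emph{decreasing} here since $\tfrac{\alpha}{1-\alpha}<0$, reverses the inequality and gives $\awae{\alpha,g}{f',\pP}(\xA)\geq\awae{\alpha,g}{f}(\xA)$ via~(\ref{equ:positivealphafact}). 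If $0<\alpha<1$, the vectors $\vv{F(o,w)}_w$ are componentwise nonnegative, so the \emph{reverse} Minkowski inequality $\norm{u+v}_\alpha\geq\norm{u}_\alpha+\norm{v}_\alpha$ (valid for nonnegative vectors), together with the same homogeneity of $\norm{\cdot}_\alpha$, turns the first ``$\leq$'' above into ``$\geq$'', giving $\V_{\alpha,g}(\XT\mid O',\xA)\geq\V_{\alpha,g}(\XT\mid O,\xA)$; now $\tfrac{\alpha}{1-\alpha}>0$, so $\tfrac{\alpha}{1-\alpha}\log(\cdot)$ is increasing and the inequality is preserved, again yielding the claim.

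The genuinely routine parts are the independence-based factorisation of $p(o'\mid\xT,\xA)$ and the swap of summations. The one point to handle with care is the direction of every inequality: both the Minkowski step and the monotonicity of $x\mapsto\tfrac{\alpha}{1-\alpha}\log x$ flip as $\alpha$ crosses $1$, and the theorem holds precisely because these two flips cancel. It is also worth recording that the argument uses only componentwise nonnegativity of the $F$-vectors (from $g\geq 0$, with the standing positivity assumption on $g$ also guaranteeing that the vulnerabilities are strictly positive so the logarithms are defined) and only the fact that $f'$ is an approximation of $f$; injectivity of the maps $h_\phi$, i.e. that $f'$ is a \emph{close} approximation, is not needed here.
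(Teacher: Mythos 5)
Your proposal is correct and follows essentially the same route as the paper's proof: factor $p(o'\mid\xT,\xA)$ through the independence of $\Phi$ and the fibres $h_\phi^{-1}(o')$, apply Minkowski's inequality (ordinary for $\alpha>1$, reversed for $0<\alpha<1$ on nonnegative vectors), collapse the double sum over $(o',o)$ using that each pair $(\phi,o)$ contributes to exactly one $o'$, and let the sign of $\tfrac{\alpha}{1-\alpha}$ flip the inequality the right way in each case. Your additional remarks (strict positivity of the vulnerabilities from positivity of $g$, and that injectivity of the $h_\phi$ is not needed) are accurate and consistent with the paper.
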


\begin{proof}
Let $D \coloneqq \DA \times \DT \times \DS$ and $X \coloneqq (\XA, \XT, \XS)$. 
By definition, since $f'$ approximates $f$, there is a function $h$ such that $f'(\xx, \phi) = h(f(\xx), \phi)$ for all $\xx$ in $D$ and all $\phi$ in $\DP$. 
The random variable representing the output of $f$, namely $O = f(X)$ has domain $\DO$.  Similarly, let $\DOp$ be the domain of the output of $f'$, namely $O'=f'(X, \Phi) = h(f(X), \Phi) = h(O, \Phi)$. 
Let $\pP$ be in $\Omega(\DP)$ and $\xA$ be in $\DA$. We recall that we have:
\[
\awae{\alpha, g}{f', \pP}(\xA)
= \frac{\alpha}{1-\alpha} \cdot \log \V_{\alpha,g}(\XT \mid O', \xA)
\]

\noindent where:
\begin{equation}
\label{eq:vuln_proof}
\V_{\alpha,g}(\XT \mid O', \xA)
= \sum_{o'} \norm{ \vv{ \sum_\xT p(\xT) \cdot p(o' \mid \xT, \xA) \cdot g(w, \xT) }_w }_\alpha
\end{equation}

Applying Bayes Theorem twice, and as $\Phi$ is independent from $\XA$, $\XT$ and $O$, we obtain that:
\begin{eqnarray}
p(o' \mid \xT, \xA) &=& 
\label{eq:bayes_once}
\sum_{\phi} p(\phi) \cdot p(o' \mid \xA, \xT, \phi) \\ &=&
\sum_{\phi} p(\phi) \cdot \sum_{o \in h_{\phi}^{-1}(o')} p(o \mid \xA, \xT)\cdot  p(o' \mid \xA, \xT, \phi, o) \nonumber
\end{eqnarray}
since $p(o' \mid \xA, \xT, \phi, o)\not=0$ only when $o$ is in $h_{\phi}^{-1}(o')$. 
Moreover, $p(o' \mid \xA, \xT, \phi, o) = 1$ for $o$ in $h_{\phi}^{-1}(o')$. 

\ourparagraph{Case $\alpha > 1$}
We can apply the triangular inequality twice from Equation (\ref{eq:vuln_proof}) in order to obtain:
\begin{equation}
\label{eq:triangular}
\V_{\alpha,g}(\XT \mid O', \xA) \leq
\sum_{o'} \sum_{\phi} p(\phi) \sum_{o \in h_{\phi}^{-1}(o')} 
\norm{ \vv{ \sum_\xT p(\xT) \cdot p(o \mid \xA, \xT) \cdot g(w, \xT) }_w }_\alpha
\end{equation}

For any given $\phi$ in $\DP$, the collection of sets $(h^{-1}_\phi(o'))_{o' \in \DOp}$ constitutes a partition of $\DO$. 
So there exists a unique $o'$ in $\DOp$ such that $o$ is in $h^{-1}_\phi(o')$. 
We can thus simplify the double summation over $o'$ and $o$ as a single sum over $o$:
\[
\V_{\alpha,g}(\XT \mid O', \xA) \leq
\sum_{\phi} p(\phi) \cdot \sum_{o}
\norm{ \vv{ \sum_\xT p(\xT) \cdot p(o \mid \xA, \xT) \cdot g(w, \xT) }_w }_\alpha
\]

\noindent Since $\alpha$ is greater than $1$, the expression $\frac{\alpha}{1-\alpha}$ is negative and we get:
\begin{equation}
\label{eq:proof_conclu}
\awae{\alpha, g}{f', \pP}(\xA) \geq 
\awae{\alpha, g}{f}(\xA)
\end{equation}

\ourparagraph{Case $\alpha < 1$} 
We can show that for all $n$ in $\Np$, for all $x$ and $y$ in $(\Rp)^n$, we have $\norm{x+y}_\alpha \geq \norm{x}_\alpha + \norm{y}_\alpha$. This follows from Minkowski inequality in the case where $\alpha$ is lower than $1$, since $x \mapsto x^\alpha$ is then concave on $\Rp$. 
This reversed triangular inequality reverses the inequality obtained in (\ref{eq:triangular}) and as $\frac{\alpha}{1 - \alpha}$ is now positive, we find the same result as in (\ref{eq:proof_conclu}). 
\end{proof}

The proof of the previous theorem is based on the analysis of the formal expressions of $\awae{\alpha,g}{f', \pP}(\xA)$ and $\awae{\alpha, g}{f}(\xA)$ when $0< \alpha\not=1$. However, we can extend this result to $\alpha=1$ and $\alpha=\infty$,
by appealing to that result for positive $\alpha\not=1$ and the continuity of inequalities under limits:
\begin{corollary}
\label{cor:lowerbound}
\begin{enumerate}
\item
We have:
\(\forall \pP \in \Omega(\DP), 
\forall \xA \in \DA \colon
\awae{\infty, g}{f', \pP}(\xA) \geq 
\awae{\infty,g}{f}(\xA).\)

\item Moreover, if $g$ is unitary, then we have:
\(\forall \pP \in \Omega(\DP), 
\forall \xA \in \DA \colon
\awae{1, g}{f', \pP}(\xA) \geq 
\awae{1, g}{f}(\xA).\)
\end{enumerate}
\end{corollary}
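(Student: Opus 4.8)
The plan is to reduce both parts of the corollary to Theorem~\ref{thm:lowerbound}, which already establishes the inequality $\awae{\alpha,g}{f',\pP}(\xA) \geq \awae{\alpha,g}{f}(\xA)$ for every positive $\alpha\neq 1$, and then pass to the limit $\alpha\to\infty$ (for part 1) or $\alpha\to 1$ (for part 2). Fix an arbitrary $\pP\in\Omega(\DP)$ and $\xA\in\DA$; since these are held fixed throughout, it suffices to prove the scalar inequality for this pair. By Definition~\ref{def:awaes} and the definition of $\awae{\alpha,g}{f',\pP}$, we have $\awae{\alpha,g}{f',\pP}(\xA) = \HH_{\alpha,g}(\XT\mid O',\xA)$ and $\awae{\alpha,g}{f}(\xA) = \HH_{\alpha,g}(\XT\mid O_f,\xA)$, so the game is entirely about the limiting behaviour of the conditional $(\alpha,g)$-entropy in its first subscript.

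For part 1, I would invoke Lemma~\ref{lemma:hlim}(1), which gives $\lim_{\alpha\to\infty}\HH_{\alpha,g}(X\mid Y) = \HH_{\infty,g}(X\mid Y)$ for \emph{any} gain function $g$. Applied with $(X,Y)$ being $(\XT, (O',\xA))$ and $(\XT,(O_f,\xA))$ respectively, this yields $\awae{\alpha,g}{f',\pP}(\xA)\xrightarrow[\alpha\to\infty]{}\awae{\infty,g}{f',\pP}(\xA)$ and $\awae{\alpha,g}{f}(\xA)\xrightarrow[\alpha\to\infty]{}\awae{\infty,g}{f}(\xA)$. Theorem~\ref{thm:lowerbound} gives the inequality for each finite $\alpha>1$, and since non-strict inequalities are preserved under limits, the desired inequality at $\alpha=\infty$ follows. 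Part 2 proceeds identically but uses Lemma~\ref{lemma:hlim}(2), whose hypothesis is exactly that $g$ be unitary --- this is why the statement of part 2 carries that extra assumption. Taking $\alpha\to 1$ from either side (say through values $\alpha>1$, or $\alpha<1$; both work since Theorem~\ref{thm:lowerbound} covers all positive $\alpha\neq 1$) and again using preservation of $\geq$ under limits gives $\awae{1,g}{f',\pP}(\xA)\geq\awae{1,g}{f}(\xA)$.

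The only real subtlety --- and the step I would be most careful about --- is making sure the limit-passing is applied to the \emph{same} gain function on both sides and that the hypotheses of Lemma~\ref{lemma:hlim} are genuinely met by the conditional-entropy instances appearing here: in the approximate case the conditioning variable is $(O',\xA)$ with $O'$ ranged in $\DOp$, and one must note that $\XT$ still has finite domain $\DT$ and $O'$ finite domain $\DOp$, so Lemma~\ref{lemma:hlim} applies verbatim. For part 2 one must also confirm that unitarity of $g$ is a property of $g$ alone (independent of which random variables it is evaluated against), so that the same $g$ triggers Lemma~\ref{lemma:hlim}(2) on both the $f'$ side and the $f$ side. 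Beyond that, the argument is a routine ``inequality plus continuity'' closure and requires no new computation.
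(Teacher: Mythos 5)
Your proposal is correct and follows the same route as the paper: apply Theorem~\ref{thm:lowerbound} for each positive $\alpha\neq 1$ and pass to the limit $\alpha\to\infty$ (resp.\ $\alpha\to 1$, using unitarity of $g$) via Lemma~\ref{lemma:hlim}, with non-strict inequalities preserved under limits. The additional care you take about the hypotheses of Lemma~\ref{lemma:hlim} is sound but does not change the argument.
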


\begin{proof}
By virtue of Lemma \ref{lemma:hlim}, we know that letting $\alpha$ tend towards $\infty$ in Theorem \ref{thm:lowerbound} yields the result stated in item 1) above. 
Similarly, if $g$ is unitary, Lemma \ref{lemma:hlim} ensures that Theorem \ref{thm:lowerbound} implies the result stated in item 2)
as $\alpha\to 1$. 
\end{proof}

Concretely, the theorem states that learning a function of the output of $f$ cannot leak more information on the inputs of $f$ than the output of $f$ may leak already. 
On the other hand, we are able to estimate an upper bound for the privacy of the inputs of the targeted parties, once a computation has been randomised. 
The next theorem states that, when replacing the computation of a function $f$ by that of a \emph{close} approximation $f'$, the entropy gain provided by a virtual input cannot exceed the entropy of the distribution for the virtual inputs. 
\begin{theorem}
\label{thm:upperBound}
Let $f'$ be a close approximation of $f$ and $0< \alpha\not=1$. 
Then, we have:
\begin{equation}
\forall \pP \in \Omega(\DP), 
\forall \xA \in \DA \colon
\awae{\alpha, g}{f', \pP}(\xA) \leq 
\awae{\alpha, g}{f}(\xA) + \HH_{\alpha}(\pP)
\end{equation}
where $\HH_{\alpha}(\pP)$ refers to the R\'enyi entropy of order $\alpha$ of the distribution $\pP$. 
\end{theorem}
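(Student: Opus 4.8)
The plan is to bound the vulnerability $\V_{\alpha,g}(\XT \mid O', \xA)$ from below in terms of $\V_{\alpha,g}(\XT \mid O, \xA)$ and the $\alpha$-norm $\norm{\vv{p(\phi)}_\phi}_\alpha = \V_\alpha(\pP)$, and then convert to entropies using $\HH_\alpha(\pP) = \frac{\alpha}{1-\alpha}\log \V_\alpha(\pP)$. The crucial structural fact we exploit is that $f'$ is a \emph{close} approximation, so each $h_\phi$ is injective; hence for a given $\phi$ the preimage $h_\phi^{-1}(o')$ is a singleton (or empty), and the double sum over $o'$ and $o\in h_\phi^{-1}(o')$ that appeared in the proof of Theorem~\ref{thm:lowerbound} collapses, so that (\ref{eq:bayes_once}) becomes simply $p(o' \mid \xT, \xA) = \sum_{\phi \con h_\phi(o_{o'}) = o'} p(\phi)\, p(o_{o'} \mid \xT, \xA)$ where $o_{o'}$ is the unique preimage. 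Equivalently, writing $o = h_\phi^{-1}(o')$, summing over $o'$ can be re-indexed as summing over pairs $(\phi, o)$.

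\noindent The main step is then a Minkowski-type argument in the opposite direction from the lower-bound theorem. Define, for each $o$ and $\phi$, the vector $v_{o} = \vv{\sum_\xT p(\xT) p(o\mid\xT,\xA) g(w,\xT)}_w$, so that $\V_{\alpha,g}(\XT\mid O,\xA) = \sum_o \norm{v_o}_\alpha$ and $\V_{\alpha,g}(\XT\mid O',\xA) = \sum_{o'} \norm{\sum_{\phi} p(\phi) v_{h_\phi^{-1}(o')}}_\alpha$, where the inner sum runs only over those $\phi$ for which $o'$ is attained. For $\alpha > 1$ the triangle inequality for $\norm{\cdot}_\alpha$ gives $\norm{\sum_\phi p(\phi) v_{h_\phi^{-1}(o')}}_\alpha \le \sum_\phi p(\phi)\norm{v_{h_\phi^{-1}(o')}}_\alpha$, but that is the wrong direction; instead I would bound $\V_{\alpha,g}(\XT\mid O',\xA)$ from above by grouping terms and applying H\"older's inequality: for each fixed $o'$, $\norm{\sum_\phi p(\phi) v_{h_\phi^{-1}(o')}}_\alpha \le \big(\sum_\phi p(\phi)^\alpha \norm{v_{h_\phi^{-1}(o')}}_\alpha^{\,?}\big)^{1/?}\cdots$ — the clean route is to observe that the map $(\phi,o)\mapsto h_\phi(o)$ lets us write $\V_{\alpha,g}(\XT\mid O',\xA)$ as an $\alpha$-norm of a matrix indexed by $(o',w)$ whose rows are nonnegative combinations of the $v_o$'s with weights $p(\phi)$, and then apply the known tensorisation/submultiplicativity inequality $\norm{\sum_\phi p(\phi) u_\phi}_\alpha \le \norm{\vv{p(\phi)}_\phi}_\alpha \cdot \max$ — more precisely the inequality $\big(\sum_{o'}\big(\sum_\phi p(\phi) (v_{o})_w\big)^\alpha\big)$ summed appropriately is at most $\V_\alpha(\pP)^\alpha \cdot \V_{\alpha,g}(\XT\mid O,\xA)^\alpha$ by a discrete Young/convolution inequality, since the $(\phi,o)\mapsto o'$ indexing is injective in $o$ for each $\phi$. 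Taking $\alpha$-th roots yields $\V_{\alpha,g}(\XT\mid O',\xA) \le \V_\alpha(\pP)\cdot \V_{\alpha,g}(\XT\mid O,\xA)$.

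\noindent Once that product inequality is in hand, the conclusion is mechanical: for $\alpha > 1$ the coefficient $\frac{\alpha}{1-\alpha}$ is negative, so taking $\frac{\alpha}{1-\alpha}\log(\cdot)$ of both sides reverses the inequality and turns the product into the sum $\awae{\alpha,g}{f',\pP}(\xA) \le \awae{\alpha,g}{f}(\xA) + \HH_\alpha(\pP)$; for $\alpha < 1$ one needs the \emph{reversed} norm inequality (as in the $\alpha<1$ case of Theorem~\ref{thm:lowerbound}, using concavity of $t\mapsto t^\alpha$ and Minkowski's inequality in reverse), which flips the product bound to $\V_{\alpha,g}(\XT\mid O',\xA) \ge \V_\alpha(\pP)\cdot\V_{\alpha,g}(\XT\mid O,\xA)$; then $\frac{\alpha}{1-\alpha}$ is positive and applying $\frac{\alpha}{1-\alpha}\log(\cdot)$ again gives the same sum bound. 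The independence of $\Phi$ from $(\XA,\XT,O)$, already used to derive (\ref{eq:bayes_once}), is what makes $p(\phi)$ factor out cleanly.

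\noindent The hard part will be pinning down the correct norm inequality in the middle step: it is a discrete convolution-type (Young's) inequality for $\alpha$-norms applied to the ``matrix'' whose $(o',w)$ entry is $\sum_{\phi} p(\phi)(v_{h_\phi^{-1}(o')})_w$, and getting the injectivity of $h_\phi$ to guarantee that the sum over $o'$ genuinely decomposes as a convolution of the sequence $p(\phi)$ with the sequence of norms $\norm{v_o}_\alpha$ (rather than over-counting). I expect the author instead reduces the $(\alpha,g)$-entropy to ordinary R\'enyi entropy of a suitable joint distribution and invokes a subadditivity/chain-rule property of Arimoto entropy — $\HH_\alpha(X\mid Y) \le \HH_\alpha(X\mid Y') + \HH_\alpha(Y'\mid Y)$-type bounds — but the self-contained route above via the scaled $\alpha$-norm inequality should also work and is likely what they do, given the parallel structure with the proof of Theorem~\ref{thm:lowerbound}.
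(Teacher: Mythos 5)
Your overall strategy --- establish a multiplicative relation between $\V_{\alpha,g}(\XT\mid O',\xA)$, $\V_{\alpha,g}(\XT\mid O,\xA)$ and $\V_\alpha(\pP)$, use the injectivity of the maps $h_\phi$ to re-index the sum over $o'$, and then convert to entropies via $\frac{\alpha}{1-\alpha}\log$ --- is indeed the paper's strategy, but you have the direction of the key inequality backwards, and this is fatal. Since $\awae{\alpha, g}{f', \pP}(\xA) = \frac{\alpha}{1-\alpha}\log \V_{\alpha,g}(\XT\mid O',\xA)$ and $\frac{\alpha}{1-\alpha}<0$ when $\alpha>1$, the conclusion $\awae{\alpha, g}{f', \pP}(\xA) \leq \awae{\alpha, g}{f}(\xA) + \HH_\alpha(\pP)$ is equivalent to the \emph{lower} bound $\V_{\alpha,g}(\XT\mid O',\xA) \geq \V_\alpha(\pP)\cdot \V_{\alpha,g}(\XT\mid O,\xA)$, not the upper bound you derive. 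The upper bound you state, $\V_{\alpha,g}(\XT\mid O',\xA) \leq \V_\alpha(\pP)\cdot \V_{\alpha,g}(\XT\mid O,\xA)$, is in fact false in general: take $h(o,\phi)=o$, so that each $h_\phi=\id$ is injective and $f'$ is a close approximation with $O'=O$; then the two conditional vulnerabilities coincide, while $\V_\alpha(\pP)<1$ for any non-degenerate $\pP$ and $\alpha>1$. Moreover, even granting your bound, applying the \emph{decreasing} map $\frac{\alpha}{1-\alpha}\log$ to it would yield $\awae{\alpha, g}{f', \pP}(\xA) \geq \awae{\alpha, g}{f}(\xA) + \HH_\alpha(\pP)$, the reverse of the theorem; your concluding step says the inequality ``reverses'' yet keeps the $\leq$, which is internally inconsistent. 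The same direction error occurs in your $\alpha<1$ case, where the bound actually needed is $\V_{\alpha,g}(\XT\mid O',\xA) \leq \V_\alpha(\pP)\cdot \V_{\alpha,g}(\XT\mid O,\xA)$.

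The paper obtains the correct lower bound (for $\alpha>1$) by two elementary steps applied inside the expression for $\V_{\alpha,g}(\XT\mid O',\xA)$, not by a H\"older or Young convolution inequality: first, superadditivity of $t\mapsto t^\alpha$ on $\Rp$ (i.e.\ $(\sum_\phi a_\phi)^\alpha \geq \sum_\phi a_\phi^\alpha$) splits off the mixture over $\phi$; then, after factoring out $p(\phi)^\alpha$ and normalising by $\sigma=\sum_\phi p(\phi)^\alpha$, Jensen's inequality for the concave map $t\mapsto t^{1/\alpha}$ pulls the convex combination outside; finally the injectivity of each $h_\phi$ converts the sum over $o'$ into the sum over $o$, giving $\V_{\alpha,g}(\XT\mid O',\xA) \geq \sigma^{1/\alpha}\,\V_{\alpha,g}(\XT\mid O,\xA)$ with $\sigma^{1/\alpha}=\V_\alpha(\pP)$; the case $\alpha<1$ is dual (concavity of $t^\alpha$, convexity of $t^{1/\alpha}$, positive log factor). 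Note also that your middle step is only sketched (you yourself flag it as ``the hard part''), so even with the direction corrected the argument would still need that inequality to be actually proved; the superadditivity-plus-Jensen route above is the clean way to do it.
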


\begin{proof}
By definition, since $f'$ closely approximates $f$, there exists some function $h$ such that $f'(\xx, \phi) = h(f(\xx), \phi)$ for all $\xx$ in $D$ and $\phi$ in $\DP$. 
Let $\pP$ be in $\Omega(\DP)$ and $\xA$ be in $\DA$. 
For sake of readability, we set
\(V' = \V_{\alpha,g}(\XT \mid O', \xA)
\)
and use $V'$ in the arguments below. From Equation (\ref{eq:bayes_once}), we recall that:
\begin{eqnarray*}
\V' &=& 
\sum_{o'} \left( \sum_w \left[ \sum_\xT p(\xT) \cdot
\sum_{\phi} p(\phi) \cdot p(o' \mid \xA, \xT, \phi) \cdot
g(w, \xT) \right]^\alpha \right)^{\frac{1}{\alpha}}
\end{eqnarray*}

\ourparagraph{Case $\alpha > 1$}
We know that $x \mapsto x^\alpha$ is convex on $\Rp$ and equals $0$ at $0$. We also know that $x \mapsto x^{\frac{1}{\alpha}}$ is increasing on $\Rp$ and thus:
\begin{eqnarray}
\V' &\geq&
\label{eq:first_ineq_up}
\sum_{o'} \left( \sum_w \sum_{\phi} \left[ \sum_\xT p(\xT) \cdot
p(\phi) \cdot p(o' \mid \xA, \xT, \phi) 
\cdot g(w, \xT) \right]^\alpha \right)^{\frac{1}{\alpha}} \\ 
&\geq&
\sum_{o'} \left( \sum_w \sum_{\phi} p(\phi)^\alpha\cdot 
\left[ \sum_\xT p(\xT)\cdot
p(o' \mid \xA, \xT, \phi) \cdot
g(w, \xT) \right]^\alpha \right)^{\frac{1}{\alpha}} \nonumber
\end{eqnarray}

Let us denote $\sum_{\phi} p(\phi)^\alpha$ by $\sigma$. 
For any $\phi$, we have
$p(\phi)^\alpha = \sigma \cdot \frac{p(\phi)^\alpha}{\sigma}$. But also $\sum_\phi \frac{p(\phi)^\alpha}{\sigma}$ equals $1$. 
We also know that $x \mapsto x^{\frac{1}{\alpha}}$ is concave. Therefore, Jensen's inequality yields:
\begin{eqnarray}
\label{eq:jensen_proof_up}
\V' &\geq&
\sigma^{\frac{1}{\alpha}}
\sum_{\phi} \frac{p(\phi)^\alpha}{\sigma}
\sum_{o'}
\left( \sum_w 
\left[ \sum_\xT p(\xT)\cdot
p(o' \mid \xA, \xT, \phi) \cdot
g(w, \xT) \right]^\alpha \right)^{\frac{1}{\alpha}}
\end{eqnarray}

Moreover, we have $p(o' \mid \xA, \xT, \phi) = p(O \in h_{\phi}^{-1}(o') \mid \xA, \xT)$ since $O$ and $\Phi$ are independent. 
Furthermore, for all $\phi$ in $\DP$, we know that $h_{\phi}$ is injective. Thus, from~(\ref{eq:jensen_proof_up}) we get that:
\begin{eqnarray*}
\V' &\geq&
\sigma^{\frac{1}{\alpha}}
\sum_{\phi} \frac{p(\phi)^\alpha}{\sigma}
\sum_{o}
\left( \sum_w 
\left[ \sum_\xT p(\xT)\cdot 
p(o \mid \xA, \xT) \cdot 
g(w, \xT) \right]^\alpha \right)^{\frac{1}{\alpha}} \\
&\geq&
\sigma^{\frac{1}{\alpha}}
\sum_{o}
\norm{ \vv{ 
\sum_\xT p(\xT)\cdot
p(o \mid \xA, \xT) \cdot
g(w, \xT) }_w }_\alpha
\end{eqnarray*}

\noindent and as $\frac{\alpha}{1-\alpha}$ is negative, the claim follows:
\begin{equation}
\label{eq:conclu_proof_up}
\awae{\alpha, g}{f', \pP}(\xA) \leq 
\awae{\alpha, g}{f}(\xA) + \HH_{\alpha}(\pP)
\end{equation}

\ourparagraph{Case $\alpha < 1$}
This us dual: $x \mapsto x^\alpha$ is concave, the inequality of (\ref{eq:first_ineq_up}) is reversed, $x \mapsto x^{\frac{1}{\alpha}}$ is convex, and the inequality in (\ref{eq:jensen_proof_up}) is reversed, too. 
However, term $\frac{\alpha}{1-\alpha}$ is now positive. Thus, a dual argument shows that~(\ref{eq:conclu_proof_up}) holds.
\end{proof}

We can also extend the result of Theorem~\ref{thm:upperBound} to the limiting cases, i.e. to when $\alpha$ equals $1$ or $\infty$. 
\begin{corollary}
\label{cor:upperbound}
\begin{enumerate}
\item
We have:
\(\forall \pP \in \Omega(\DP), 
\forall \xA \in \DA \colon
\awae{\infty, g}{f', \pP}(\xA) \leq 
\awae{\infty, g}{f}(\xA) + \HH_{\infty}(\pP).\)

\item Moreover, if $g$ is unitary, then we have:
\(\forall \pP \in \Omega(\DP), 
\forall \xA \in \DA \colon
\awae{1, g}{f', \pP}(\xA) \leq 
\awae{1, g}{f}(\xA) + \HH_{1}(\pP).\)
\end{enumerate}
\end{corollary}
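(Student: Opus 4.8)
The plan is to argue exactly as in the proof of Corollary~\ref{cor:lowerbound}: invoke Theorem~\ref{thm:upperBound} for every $0<\alpha\neq 1$ and then pass to the limit, using that a weak inequality is preserved under limits. As in Theorem~\ref{thm:upperBound}, I assume $f'$ is a \emph{close} approximation of $f$, fix $\pP$ in $\Omega(\DP)$ and $\xA$ in $\DA$, and work with the inequality
\[
\awae{\alpha, g}{f', \pP}(\xA) \leq \awae{\alpha, g}{f}(\xA) + \HH_\alpha(\pP),
\]
which Theorem~\ref{thm:upperBound} establishes for all such $\alpha$.

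For item 1 I would let $\alpha\to\infty$ and check that each of the three $\alpha$-indexed quantities converges separately. Lemma~\ref{lemma:hlim}(1), applied to the conditional distribution given $\xA$ of $\XT$ given the output of $f'$ (with partition $(\bA,\bT,\bS\cup\bV)$ and input distribution $(\pT,\pS\cdot\pP)$), gives $\awae{\alpha,g}{f',\pP}(\xA)\to\awae{\infty,g}{f',\pP}(\xA)$; the same lemma applied with partition $(\bA,\bT,\bS)$ gives $\awae{\alpha,g}{f}(\xA)\to\awae{\infty,g}{f}(\xA)$; and $\HH_\alpha(\pP)\to\HH_\infty(\pP)$ is the standard convergence of R\'enyi entropy to min-entropy recalled in Section~\ref{section:background}. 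Taking the limit of both sides of the displayed inequality then yields item 1.

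Item 2 is the same argument with $\alpha\to 1$ (restricting to $0<\alpha<1$, say). The difference is that one now invokes Lemma~\ref{lemma:hlim}(2) --- valid precisely because $g$ is unitary --- to obtain $\awae{\alpha,g}{f',\pP}(\xA)\to\awae{1,g}{f',\pP}(\xA)$ and $\awae{\alpha,g}{f}(\xA)\to\awae{1,g}{f}(\xA)$, together with the convergence of the R\'enyi entropy $\HH_\alpha(\pP)$ to the Shannon entropy $\HH_1(\pP)$ (the L'H\^opital computation recalled in Section~\ref{section:background}; note that $\pP$ has finite support). Passing to the limit proves item 2.

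I do not expect a genuine obstacle here: the content is purely the bookkeeping of three separate limits, after which the conclusion follows because weak inequalities are closed under limits. The one point worth stressing is that the unitarity hypothesis in item 2 is essential and cannot be removed: it is exactly what makes Lemma~\ref{lemma:hlim}(2) applicable, and without it the $(\alpha,g)$-vulnerabilities need not tend to $0$ as $\alpha\to 1$, so the $\awae$ terms need not converge to their $\alpha=1$ counterparts.
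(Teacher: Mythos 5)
Your proposal is correct and follows the paper's own proof essentially verbatim: both apply Theorem~\ref{thm:upperBound} for $0<\alpha\neq 1$ and pass to the limit $\alpha\to\infty$ (resp.\ $\alpha\to 1$), using Lemma~\ref{lemma:hlim} for the convergence of the two $\awaeOp$ terms, the standard convergence of R\'enyi entropy to min-entropy (resp.\ Shannon entropy) for $\HH_\alpha(\pP)$, and the preservation of weak inequalities under limits. Your added remarks on the close-approximation hypothesis and the necessity of unitarity for item~2 are consistent with the paper's assumptions.
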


\begin{proof}
Lemma \ref{lemma:hlim} ensures the convergence of $\awae{\alpha, g}{f', \pP}(\xA)$ and of $\awae{\alpha,g}{f}(\xA)$ towards $\awae{\infty, g}{f', \pP}(\xA)$ and of $\awae{\infty, g}{f}(\xA)$ respectively when $\alpha$ tends to $\infty$. 
Moreover, it is known that the R\'enyi entropy $\HH_{\alpha}(\pP)$ of order $\alpha$ converges to the min-entropy $\HH_{\infty}(\pP)$ as $\alpha$ tends towards $\infty$. 
Thus, the result stated in item 1) follows from Theorem \ref{thm:upperBound} by letting $\alpha$ tend towards $\infty$. 
A similar argument concludes the proof for $\alpha$ tending towards $1$, in the case that $g$ is unitary. 
\end{proof}

Although one could have proved Corollaries \ref{cor:lowerbound} and \ref{cor:upperbound} with bespoke and somewhat different arguments, it is pleasing to see that our generalised conditional entropy makes such arguments uniform and reasonably simple.
Given a close approximation $f'$ of a function $f$, Theorems \ref{thm:lowerbound} and \ref{thm:upperBound} and Corollaries \ref{cor:lowerbound} and \ref{cor:upperbound} give us a lower bound and an upper bound for the entropy gain that a given virtual distribution provides. 
We can formalise this through a gain function $\Gamma_{\alpha,g}$, which indicates how much entropy we gain by adding a virtual input to the SMC computation --- as a function of the chosen probability distribution of this virtual input:
\begin{definition}
Let $f'$ be a close approximation of a function $f$.
Let $\alpha$ be in $\Rpe \cup \{ \infty \}$. 
Let us further assume that either $\alpha$ is different from $1$ or $g$ is unitary. Then, we define the function $\Gamma_{\alpha,g}$ for all $\pP$ in $\Omega(\DP)$ and $\xA$ in $\DA$ by:
\begin{equation}
\Gamma_{\alpha,g}(\pP, \xA) \coloneqq \awae{\alpha, g}{f', \pP}(\xA) - \awae{\alpha, g}{f}(\xA)
\end{equation}
\end{definition}

Then, under the assumptions of Theorem 2, we can summarise our above results as follows:
\begin{corollary}
Let $f'$ be a close approximation of $f$. 
Let $\alpha$ be in $\Rpe \cup \{ \infty \}$. 
Let us further assume that either $\alpha$ is different from $1$ or $g$ is unitary. 
Then, we have:
\begin{equation}
\forall \pP \in \Omega(\DP), 
\forall \xA \in \Omega(\DA) \colon
0 \leq \Gamma_{\alpha,g}(\pP, \xA) \leq \HH_\alpha(\pP)
\end{equation}
\end{corollary}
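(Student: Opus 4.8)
The plan is to assemble the final corollary directly from the results already established, since it is essentially a packaging statement. First I would observe that the definition of $\Gamma_{\alpha,g}(\pP, \xA)$ is exactly $\awae{\alpha, g}{f', \pP}(\xA) - \awae{\alpha, g}{f}(\xA)$, so the two inequalities to prove are equivalent to
\[
\awae{\alpha, g}{f}(\xA) \leq \awae{\alpha, g}{f', \pP}(\xA) \leq \awae{\alpha, g}{f}(\xA) + \HH_\alpha(\pP).
\]
Thus the entire content reduces to citing the lower bound and the upper bound we have already proved, for each admissible value of $\alpha$.

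Second, I would split on the value of $\alpha$ in exactly the way the hypotheses invite. For $0 < \alpha \neq 1$: the left inequality $0 \leq \Gamma_{\alpha,g}(\pP,\xA)$ is Theorem~\ref{thm:lowerbound}, and the right inequality $\Gamma_{\alpha,g}(\pP,\xA) \leq \HH_\alpha(\pP)$ is Theorem~\ref{thm:upperBound} (whose hypothesis that $f'$ is a close approximation of $f$ is part of the corollary's assumptions). For $\alpha = \infty$: the left inequality is item~1 of Corollary~\ref{cor:lowerbound} and the right inequality is item~1 of Corollary~\ref{cor:upperbound}. For $\alpha = 1$: here the standing hypothesis forces $g$ to be unitary, so the left inequality is item~2 of Corollary~\ref{cor:lowerbound} and the right inequality is item~2 of Corollary~\ref{cor:upperbound}; one notes that $\HH_1(\pP)$ in the statement is the Shannon entropy of $\pP$, matching the $\HH_1(\pP)$ appearing in that corollary. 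In each of the three cases the two inequalities combine to give $0 \leq \Gamma_{\alpha,g}(\pP, \xA) \leq \HH_\alpha(\pP)$, which is the claim.

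A minor bookkeeping point worth flagging: the corollary quantifies $\xA$ over $\Omega(\DA)$ whereas Theorems~\ref{thm:lowerbound} and~\ref{thm:upperBound} and the earlier corollaries quantify over $\DA$ itself. I would either read this as a typo for $\DA$ or, if one wants to take it at face value, note that $\awae{\alpha,g}{f}$ and $\awae{\alpha,g}{f',\pP}$ are defined pointwise on $\DA$ and the bounds hold at every such point, so nothing changes; the distinction is immaterial to the argument.

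There is essentially no obstacle here: every ingredient has been proved, the definition of $\Gamma_{\alpha,g}$ is literally the difference being bounded, and the three-way case split is dictated by the hypotheses (in particular the clause ``either $\alpha$ is different from $1$ or $g$ is unitary'' is precisely what guarantees the $\alpha = 1$ case is covered by the unitary versions of the corollaries). If I had to name a ``hard part'' it would only be the trivial care needed to match up the limiting entropy notations ($\HH_\infty(\pP)$ as a min-entropy, $\HH_1(\pP)$ as a Shannon entropy) with the general R\'enyi notation $\HH_\alpha(\pP)$ used in the statement, which is handled by the convergence remarks already invoked in the proof of Corollary~\ref{cor:upperbound}.
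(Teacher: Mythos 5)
Your proposal is correct and matches the paper's proof, which simply states that the corollary is a direct consequence of Theorems~\ref{thm:lowerbound} and~\ref{thm:upperBound} and Corollaries~\ref{cor:lowerbound} and~\ref{cor:upperbound}; you merely spell out the case split on $\alpha$ (and the $\Omega(\DA)$ versus $\DA$ typo) that the paper leaves implicit.
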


\begin{proof}
This result is a direct consequence of Theorems \ref{thm:lowerbound} and \ref{thm:upperBound} and Corollaries \ref{cor:lowerbound} and \ref{cor:upperbound}. 
\end{proof}

Let us now illustrate Theorems \ref{thm:lowerbound} and \ref{thm:upperBound} by means of a worked example. 
\begin{example}
\label{ex:minBounds}
Let us re-consider the scenario in Example~\ref{example:basic}
with the additive approximation $\tid{f}$ of $f$; in particular, $f'$ is a close approximation of $f$. 
We study the behaviour of the conditional min-entropy of the targeted inputs when we approximate $f$ with $\tid{f}$. In other words, we study the function $\awae{\infty, \id}{\tid{f}, \pP}$ for different distributions $\pP$. 
Since $\tid{f}$ is a close approximation of $f$, Theorems \ref{thm:lowerbound} and \ref{thm:upperBound} apply, and thus for all $\pP$ in $\Omega(\Z)$ and for all $\xA$ in $\DA$, we have:
\begin{equation}
\label{eq:exBoundsMin}
\awae{\infty, \id}{f}(\xA) \leq \awae{\infty, \id}{\tid{f}, \pP}(\xA) \leq \awae{\infty, \id}{f}(\xA) + \HH_{\infty}(\pP)
\end{equation}

In order to illustrate this property, we choose different distributions for $\phi$ that all have equal min-entropy:
\begin{eqnarray*}
\pP_1 &=& \{-2\colon \sfrac{1}{4}, 0\colon\sfrac{1}{4}, 2\colon\sfrac{1}{4}, 4\colon\sfrac{1}{4}\} \\
\pP_2 &=& \{-1\colon \sfrac{1}{4}, 0\colon\sfrac{1}{4}, 1\colon\sfrac{1}{4}, 2\colon\sfrac{1}{4}\} \\
\pP_3 &=& \{-3\colon \sfrac{1}{8}, -2\colon \sfrac{1}{8}, -1\colon\sfrac{1}{8}, 0\colon \sfrac{1}{4}, 1\colon\sfrac{1}{8}, 2\colon \sfrac{1}{8}, 3\colon\sfrac{1}{8}\}
\end{eqnarray*}

All those distributions have a min-entropy of $-\log(\sfrac{1}{4}) = 2$.
In Figure~\ref{fig:minBounds}, we plot the functions $\awae{\infty, \id}{f}$, $\awae{\infty, \id}{f} + 2$, and $\awae{\infty, \id}{\tid{f}, \pP_i}$ for all $1\leq i\leq 3$. We can verify that Equation~(\ref{eq:exBoundsMin}) indeed holds: for all $1\leq i\leq 3$, the function $\awae{\infty, \id}{\tid{f}, \pP_i}$ is contained between the functions $\awae{\infty, \id}{f}$ and $\awae{\infty,\id}{f} + \HH_{\infty}(\pP_i)$. 

Finally, note that although the three virtual distributions $\pP_i$ have equal min-entropy, they produce different values for $\awae{\infty,\id}{\tid{f}, \pP_i}$. From the plots we can clearly see, e.g., that $\pP_1$ produces higher entropy values than $\pP_2$ in general. 
This observation motivates us to seek optimal virtual distributions, which we focus on in the next section. 
\begin{figure}
\centering
\begin{tikzpicture}[scale=.9]
	\begin{axis}[
	scale=1,
	 ymin=0,
	 xmin=0,
	 xmax=30,
	  xlabel=$\xA$,
	  ylabel=entropy,
	  legend pos=outer north east]
	\addplot[blue] table [y=fx, x=x, mark=none]{minBounds.dat};
	\addlegendentry{$\awae{\infty, \id}{f}(\xA)$}
	\addplot[green!60!black, densely dotted, very thick] table [y=p1, x=x, mark=none]{minBounds.dat};
	\addlegendentry{$\awae{\infty,\id}{\tid{f}, \pP_1}(\xA)$}
	\addplot[orange, loosely dotted, ultra thick] table [y=p2, x=x, mark=none]{minBounds.dat};
	\addlegendentry{$\awae{\infty,\id}{\tid{f}, \pP_2}(\xA)$}
	\addplot[violet, dashed, thick] table [y=p3, x=x, mark=none]{minBounds.dat};
	\addlegendentry{$\awae{\infty,\id}{\tid{f}, \pP_3}(\xA)$}
	\addplot[red] table [y=fxp, x=x, mark=none]{minBounds.dat};
	\addlegendentry{$\awae{\infty, \id}{f}(\xA) + 2$}
\end{axis}
\end{tikzpicture}
\caption{Illustration of the bounds of $\awae{\infty, \id}{\protect\tid{f}, \pP_i}$ in the computation of $f(x, y, z) = x(2y + z) + 2z$ with $\pT$ and $\pS$ uniform over $\llrr{1}{30}$. For all $i$ in $\llbracket 1, 3 \rrbracket$, the function $\awae{\infty, \id}{\protect\tid{f}, \pP_i}$ is contained between $\awae{\infty, \id}{f}$ and $\awae{\infty,\id}{f} + \HH_{\infty}(\pP_i)$. 
}
\label{fig:minBounds}
\end{figure}
\end{example}

\section{Optimal Trade-Off Between Accuracy and Privacy}
\label{section:optimal}

So far, we developed a means of replacing a function $f$ by an approximating function $f'$ which resorts to additional, virtual inputs governed by some distribution. 
We showed that such approximations enable us to protect the privacy of the targeted inputs. 
These benefits are hampered by the fact that the approach introduces a distortion on the output for function $f$ when computing with $f'$ instead.
The participants of the SMC computation from set $\bP$ are either eager to learn the actual output of function $f$ or they would tolerate only a certain difference between the outputs of $f$ and $f'$, and these demands would typically be informed by the use-context of the SMC computation. 

Therefore, we need to have methods by which we can control the support and the distribution of the virtual input, in order to measure and control both the distortion that $f'$ and $\pi_{\Phi}$ introduce, and the privacy gain that it offers over using $f$ for SMC instead. 
We therefore develop now the formalism needed for studying the inherent trade-off between the accuracy of the output and the privacy of supplied inputs. 
We also recall that Assumption \ref{ass:beliefs} ensures that any of the parties can perform the methods we introduce next and compute optimal virtual distributions. 
When replacing the computation of a function $f$ by an approximation $f'$, the output accuracy is directly influenced by the choice of $f'$. 
A function $f'$ that is the constant $0$ function, e.g., would not reveal anything about the inputs, but be very inaccurate.
\begin{assumption}
In the remaining paper, we will focus on additive approximations $\tid{f}$ of $f$.
\end{assumption}

\noindent This is a natural assumption to make, it simplifies our problem, as shown in (\ref{eq:distortion_additive}), and enables us to characterise optimal virtual distributions. 
We will also propose some practical methods from optimisation for discovering virtual distributions that realise this trade-off in an optimal manner.

\subsection{Maximal and Optimal Distortion}

We want to contain 
the distortion introduced by computing $\tid{f}$ instead of $f$. 
Formally, for a given virtual input $\phi$ with distribution $\pP$, we will tolerate a certain distortion threshold $\Delta$ in $\Np$ that serves as upper bound for the maximal absolute difference $\xi(f, \tid{f})$ between the output of $f$ and $\tid{f}$, i.e.\ $\xi(f, \tid{f}) \leq \Delta$ where:
\begin{equation}
\label{equ:maxdiff}
\xi(f, \tid{f}) \coloneqq
\max_{\substack{
	\phi \in \supp(\pP) \\ 
	\xx \in D }} 
	| f(\xx) - \tid{f}(\xx,\phi) |
\end{equation}

\noindent and where $\supp(\pP) \coloneqq \{ \phi \in \DP \mid \pP(\phi) > 0 \}$ denotes the support of $\pP$. 
For the additive approximation $\tid{f}$ of $f$, we can see that $\xi(f, \tid{f})$ equals $\max \{|\phi| \mid \phi \in \supp(\pP)\}$. Thus, we have:
\begin{equation}
\label{eq:distortion_additive}
\xi(f, \tid{f}) \leq \Delta 
\iff
\supp(\pP) \subseteq \llrr{-\Delta}{+\Delta}
\end{equation}

For such additive approximation $\tid{f}$, our examples suggested that different distributions $\pP$ for a virtual input $\phi$ can yield different privacy gains for the targeted inputs. 
We are thus interested in studying the influence of the distribution $\pP$ of the virtual input on the behaviour of the leakage measure $\awae{\alpha, g}{\tid{f}, \pP}$. To that end, we first  want to evaluate how much privacy is being protected by $\tid{f}$ and $\pi_{\Phi}$ within a distortion threshold $\Delta$. We can do this through a metric, our objective function for optimisation, that uses the weighted average of $\awae{\alpha, g}{\tid{f}, \pP}$ over all the values of $\xA$. 
\begin{definition}
Let $\alpha$ be in $\Rpe \cup \{\infty\}$. 
Function $\obj_{\alpha,g} \colon \Omega(\DP) \to \Rp$ is defined, for all $\pP$ in $\Omega(\DP)$, as:
\begin{equation}
\label{eq:obj_f}
\obj_{\alpha,g}(\pP) = 
\sum_{\xA \in \DA} p(\xA) \cdot \awae{\alpha, g}{\tid{f}, \pP}(\xA)
\end{equation}
\end{definition}

The targeted parties in $\bT$ --- and perhaps others --- now want to find a distribution $\pP$ that will be optimal for this given metric, under the constraint that the distortion should remain below the threshold $\Delta$. 

Entropies, as mathematical functions, are such that the larger their output is, the less do we actually know.
Therefore, we mean to find a \emph{global maximum} of the above metric, subject to the distortion-bound constraint. This ensures that an attacker has, on average, the least information gain for this from all possible virtual distributions.
Using the equivalence in Equation (\ref{eq:distortion_additive}), this naturally leads us to the following optimisation problem. 
\begin{definition}
\label{def:opti}
Let $\Delta$ be in $\Np$, 
let $\alpha$ be in $\Rpe \cup \{\infty\}$. 
Then we denote by $\OP_{\alpha,g}(\Delta)$ the optimisation problem:
\begin{equation}
\label{equ:opti}
\begin{aligned}
& \underset{\pP \in \Omega(\llrr{-\Delta}{+\Delta})}{\textnormal{maximise}}
& & \obj_{\alpha, g}(\pP)
\end{aligned}
\end{equation}

\item We write $\omega_{\alpha,g}$ for the optimal objective value in~(\ref{equ:opti}). 
\end{definition}

Note that this optimisation problem can equivalently be expressed as optimising the $2\Delta + 1$ values of distribution $\pP$:
\begin{equation}
\begin{aligned}
& \underset{(\pP(i))_{-\Delta \leq i \leq +\Delta}}{\textnormal{maximise}}
& & \obj_{\alpha, g}(\pP)\\
& \text{subject to}
& & \sum_{i \in \llrr{-\Delta}{+\Delta}} \pP(i) = 1\\
& \text{and}
& & \forall i \in \llrr{-\Delta}{+\Delta} \colon 0 \leq \pP(i) \leq 1
\end{aligned}
\end{equation}

	\subsection{Computing Optimal Virtual Distributions}
We now discuss methods for solving this optimisation problem and computing optimal virtual distributions, where we distinguish between the cases in which $\alpha$ is $\infty$ or greater or equal to $1$. 

\ourparagraph{Optimal Virtual Input Randomisation when $1 \leq \alpha < \infty$} 
For a gain function $g\colon \cW \times \DT \to [0,1]$, let us study the objective function of $\OP_{\alpha,g}(\Delta)$. 
We recall that for all $\xA$ in $\DA$ and for $\V_{\alpha,g}$ as defined in~(\ref{equ:Vag}), we have:
\[
\awae{\alpha, g}{\tid{f}, \pP} (\xA)
= \frac{\alpha}{1-\alpha} \cdot \log\bigl ( \V_{\alpha,g}(\XT \mid O', \xA)\bigr )
\]

\noindent and where, for all $\xT$ in $\DT$, the term $p(o' \mid \xT, \xA)$ is a linear function of $\pP$, namely:
\[
p(o' \mid \xT, \xA) = 
\sum_{\substack{\xS, \phi \\ \tid{f}(\xA, \xT, \xS, \phi) = o'}} 
p(\xS) \cdot p(\phi)
\]

\noindent Below, we may write $p(o', \mid \xT, \xA)[\pP]$ for $p(o', \mid \xT, \xA)$ in order to make this linear dependency on $\pP$ explicit. 

We thus have a non-linear and non-convex optimisation problem with linear constraints and where the objective function is twice continuously differentiable almost everywhere. 
Sequential Quadratic Programming (SQP) \cite{nocedal2006sequential,boggs1995sequential} would thus seem like an adequate and simple solution for finding a local optimum for our optimisation problem. 
However, SQP requires the constraints and the objective function to be twice continuously differentiable, which is not the case of our objective function: for all $\alpha > 1$ and all integer $n > 1$, the function $y \mapsto \lVert y \rVert_\alpha$ is not differentiable at the origin even when restricted to $(\Rp)^n \to \Rp$. 
Consequently, our objective function is not differentiable at the points $\pP_0$ in $\Omega(\DP)$ such that $\pP_0$ makes $p(o', \mid \xT, \xA)$ be $0$ but where $p(o', \mid \xT, \xA)$ is not always $0$, i.e., when:
\[ 
\left( p(o', \mid \xT, \xA)[\pP_0] = 0 \right) \wedge
\left( \exists \pP_1 \in \Omega(\DP) \colon 
p(o', \mid \xT, \xA)[\pP_1] > 0 \right)
\]

\noindent We will address this by smoothening the objective function in~(\ref{eq:obj_f}) through a non-zero offset vector $\bm{\delta}$ in $(\Rp)^{|\DT|}$ that is added to the argument of the $\alpha$-norm --- the expression in~(\ref{equ:Vag}) with $O'$ instead of $O$. This approximation is then twice continuously differentiable everywhere. 
We introduce some definitions for formalising this:
\begin{definition}
Let $\alpha$ be in $]1, \infty[$. 
Let $\bm{\delta}\not=0$ be in $(\Rp)^{|\DT|}$.
\begin{enumerate}
\item
Let $\pP$ be in $\Omega(\DP)$. 
For all $\xA$ in $\DA$, we define:
\begin{eqnarray}
\V_{\alpha, g}^{\bm{\delta}}(\XT \mid O', \xA)
&\coloneqq&
\sum_{o'} 
\norm{ 
	\bm{\delta} + 
	\vv{ \sum_\xT p(\xT) \cdot p(o' \mid \xT, \xA) \cdot g(w, \xT) }_w 
}_\alpha\\
\awae{\alpha, g}{\tid{f}, \pP, \bm{\delta}} (\xA)
&\coloneqq& \frac{\alpha}{1-\alpha} \cdot \log \bigl (\V_{\alpha, g}^{\bm{\delta}}(\XT \mid O', \xA)\bigr )
\end{eqnarray}

\item We define the function $\obj_{\alpha, g}^{\bm{\delta}} \colon \Omega(\DP) \to \Rp$ for all $\pP$ in $\Omega(\DP)$ as:
\begin{equation}
\obj_{\alpha, g}^{\bm{\delta}}(\pP) \coloneqq
\sum_{\xA \in \DA} p(\xA) \cdot \awae{\alpha, g}{\tid{f}, \pP, \bm{\delta}}(\xA)
\end{equation}

\noindent For $\Delta$ in $\Np$, we define $\OP_{\alpha, g}^{\bm{\delta}}(\Delta)$ as the following optimisation problem:
\begin{equation}
\label{equ:opti_delta}
\begin{aligned}
& \underset{\pP \in \Omega(\llrr{-\Delta}{+\Delta})}{\textnormal{maximise}}
& & \obj_{\alpha, g}^{\bm{\delta}}(\pP)
\end{aligned}
\end{equation}

\noindent We write $\omega_{\alpha, g}^{\bm{\delta}}$ for the global maximum of the optimisation problem $\OP_{\alpha, g}^{\bm{\delta}}(\Delta)$. 
\end{enumerate}

\end{definition}

Using the above optimisation problems, we are now able to approximate the result of the original problem in (\ref{equ:opti}) with an arbitrary accuracy by choosing the value of $\bm{\delta}$. We formalise this next:
\begin{theorem}
\label{thm:opt_delta}
Let $\alpha$ be in $]1, \infty[$. 
Let $g$ be a $\beta$-positive gain function (as defined in Definition \ref{def:g_functions}).
Let $\Delta$ be in $\Np$ and let $\bm{\delta}$ be the vector in $(\Rpe)^{|\DT|}$ whose $|\DT|$ components all equal $\delta$ in $\Rpe$.  Then, we have:
\begin{equation}
\forall \epsilon > 0 \colon
\left(
\delta \leq (1 - \frac{1}{\alpha}) \cdot \frac{\epsilon \cdot \beta \cdot \ln (2)}{|\DOp| \cdot |\cW|}
\right)
\implies
\left(
| \omega_{\alpha,g} - \omega_{\alpha,g}^{\bm{\delta}} | \leq \epsilon
\right)
\end{equation}

\noindent where $\ln (2)$ refers to the natural logarithm of $2$. 
\end{theorem}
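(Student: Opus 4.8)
The plan is to control the difference between $\obj_{\alpha,g}$ and $\obj_{\alpha,g}^{\bm{\delta}}$ \emph{uniformly} in $\pP$, since then the same bound transfers to the global maxima $\omega_{\alpha,g}$ and $\omega_{\alpha,g}^{\bm{\delta}}$ over the compact feasible set $\Omega(\llrr{-\Delta}{+\Delta})$. Concretely, if $\sup_{\pP} |\obj_{\alpha,g}(\pP) - \obj_{\alpha,g}^{\bm{\delta}}(\pP)| \leq \epsilon$, then both optimal values lie within $\epsilon$ of each other by the standard argument: evaluate each objective at the other's maximiser. So the whole theorem reduces to a pointwise (in $\xA$ and $o'$) estimate on how much the offset $\bm{\delta}$ perturbs an $\alpha$-norm, followed by summing and pushing the error through the logarithm and the factor $\frac{\alpha}{1-\alpha}$.

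For the core estimate, fix $\xA$ and write $\vec{v}_{o'} = \vv{\sum_\xT p(\xT)\,p(o'\mid \xT,\xA)\,g(w,\xT)}_w$, a nonnegative vector in $(\Rp)^{|\cW|}$. The key inequalities are $\norm{\vec v_{o'}}_\alpha \leq \norm{\bm\delta + \vec v_{o'}}_\alpha \leq \norm{\vec v_{o'}}_\alpha + \norm{\bm\delta}_\alpha$ (monotonicity of the $\alpha$-norm in each coordinate for the lower bound, triangle inequality for the upper), with $\norm{\bm\delta}_\alpha = \delta\,|\cW|^{1/\alpha} \leq \delta\,|\cW|$. Summing over the at most $|\DOp|$ values of $o'$, one gets $\V_{\alpha,g}(\XT\mid O',\xA) \leq \V_{\alpha,g}^{\bm\delta}(\XT\mid O',\xA) \leq \V_{\alpha,g}(\XT\mid O',\xA) + |\DOp|\cdot\delta\cdot|\cW|$. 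Here is where $\beta$-positivity enters: since $g$ is $\beta$-positive, for every $o'$ the vector $\vec v_{o'}$ has $\ell_1$-mass $\sum_w \sum_\xT p(\xT)p(o'\mid\xT,\xA)g(w,\xT) = \sum_\xT p(\xT)p(o'\mid\xT,\xA)\sum_w g(w,\xT) \geq \beta \sum_\xT p(\xT)p(o'\mid\xT,\xA)$, and summing over $o'$ gives total $\ell_1$-mass $\geq \beta$; hence $\V_{\alpha,g}(\XT\mid O',\xA) \geq \beta$ for $\alpha \geq 1$ (the $\ell_\alpha$ sum of the $\norm{\vec v_{o'}}_\alpha$ dominates the $\ell_1$-mass lower bound appropriately — one should double-check the exact constant, but a lower bound of the form $c\beta$ for some absolute $c$ is all that is needed, and in fact $\beta$ works). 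This positive lower bound is what makes the logarithm Lipschitz: $0 \leq \log(V + \eta) - \log V \leq \frac{\eta}{V \ln 2} \leq \frac{\eta}{\beta\ln 2}$ with $\eta = |\DOp|\cdot\delta\cdot|\cW|$.

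Putting it together: $|\awae{\alpha,g}{\tid f,\pP}(\xA) - \awae{\alpha,g}{\tid f,\pP,\bm\delta}(\xA)| = \frac{\alpha}{\alpha-1}\,|\log \V^{\bm\delta} - \log \V| \leq \frac{\alpha}{\alpha-1}\cdot \frac{|\DOp|\cdot\delta\cdot|\cW|}{\beta\ln 2}$. Taking the convex combination $\sum_\xA p(\xA)(\cdots)$ leaves this bound unchanged, so $|\obj_{\alpha,g}(\pP) - \obj_{\alpha,g}^{\bm\delta}(\pP)| \leq \frac{\alpha}{\alpha-1}\cdot\frac{|\DOp|\cdot|\cW|}{\beta\ln 2}\cdot\delta$ for every $\pP$. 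Now the hypothesis $\delta \leq (1-\frac1\alpha)\cdot\frac{\epsilon\cdot\beta\cdot\ln 2}{|\DOp|\cdot|\cW|}$ is precisely what forces this quantity to be $\leq \epsilon$, since $\frac{\alpha}{\alpha-1}\cdot(1-\frac1\alpha) = 1$. Transferring to the optima via the compactness argument above finishes the proof.

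The main obstacle I anticipate is getting the lower bound $\V_{\alpha,g}(\XT\mid O',\xA) \geq \beta$ (or a clean multiple of $\beta$) exactly right: one must be careful that $\beta$-positivity of $g$ genuinely survives the weighting by $p(\xT)$ and the marginalisation over $o'$, and that passing from an $\ell_1$-mass lower bound on $\sum_{o'}\vec v_{o'}$ to an $\alpha$-norm-sum lower bound on $\sum_{o'}\norm{\vec v_{o'}}_\alpha$ does not cost an uncontrolled factor. Everything else — monotonicity and triangle inequality for $\alpha$-norms ($\alpha\geq 1$), Lipschitz control of $\log$ away from $0$, and the optimum-transfer argument — is routine. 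A secondary point to handle cleanly is that the theorem is stated for $\alpha\in\,]1,\infty[$ only, so the differentiability subtleties at $\alpha=1$ and the reversed inequalities for $\alpha<1$ never arise here.
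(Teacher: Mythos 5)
Your overall route is the same as the paper's: bound $|\obj_{\alpha,g}(\pP)-\obj_{\alpha,g}^{\bm{\delta}}(\pP)|$ uniformly in $\pP$ via monotonicity plus the triangle inequality for the $\alpha$-norm, make the logarithm Lipschitz by a positive lower bound on the vulnerability coming from $\beta$-positivity, and transfer the uniform bound to the two optimal values. The one step you flagged is, however, exactly the one that fails as written: the claim $\V_{\alpha,g}(\XT\mid O',\xA)\geq\beta$ is false in general. For $\alpha>1$ one has $\norm{x}_\alpha\leq\norm{x}_1$, so the $\ell_1$-mass bound $\sum_{o'}\norm{\vec{v}_{o'}}_1\geq\beta$ does not pass to $\sum_{o'}\norm{\vec{v}_{o'}}_\alpha$. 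Concretely, take $\cW=\{w_1,w_2\}$ and $g\equiv 1/2$ (so $\beta=1$) with a single attainable output $o'$: then $\vec{v}_{o'}=(1/2,1/2)$ and $\V_{\alpha,g}(\XT\mid O',\xA)=\norm{\vec{v}_{o'}}_\alpha=2^{\frac{1}{\alpha}-1}<1=\beta$.

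The correct lower bound, which the paper uses, is the finite-dimensional norm comparison $\norm{x}_\alpha\geq\norm{x}_1\cdot|\cW|^{\frac{1}{\alpha}-1}$, giving $\sum_{o'}\norm{\vec{v}_{o'}}_\alpha\geq\beta\cdot|\cW|^{\frac{1}{\alpha}-1}$. This costs a factor $|\cW|^{1-\frac{1}{\alpha}}$, and if you keep your over-estimate $\norm{\bm{\delta}}_\alpha\leq\delta|\cW|$ as well, the resulting uniform bound is worse than the theorem's by that same factor, so the stated threshold on $\delta$ would only yield accuracy $\epsilon\cdot|\cW|^{1-\frac{1}{\alpha}}$; your final constant only comes out right because the false lower bound and the over-estimate happen to cancel. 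The fix is simply not to discard the exponent: keep $\sum_{o'}\norm{\bm{\delta}}_\alpha=\delta\cdot|\DOp|\cdot|\cW|^{\frac{1}{\alpha}}$ exactly, so that the ratio is $\delta|\DOp||\cW|^{\frac{1}{\alpha}}\big/\bigl(\beta|\cW|^{\frac{1}{\alpha}-1}\bigr)=\delta|\DOp||\cW|/\beta$, and after $\log(1+x)\leq x/\ln 2$ and the factor $\frac{\alpha}{\alpha-1}$ you recover exactly the constant in the statement. With that correction your argument coincides with the paper's proof, including the final transfer to $|\omega_{\alpha,g}-\omega_{\alpha,g}^{\bm{\delta}}|$ via the sup-norm over $\pP$.
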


\begin{proof}
Let $\pP$ be in $\Omega(\DP)$, let $o'$ be in $\DOp$, and let $\xA$ be in $\DA$
For sake of convenience, let us define the vector:
\[ W_{\xA}^{o'} \coloneqq \vv{ \sum_\xT p(\xT) \cdot p(o' \mid \xT, \xA) \cdot g(w, \xT) }_w  \]

\noindent First, as all the components of the vectors are non-negative, we have:
\[
\norm{W_{\xA}^{o'} + \bm{\delta}}_\alpha \geq
\norm{W_{\xA}^{o'}}_\alpha \]

\noindent Since $\alpha$ is greater than $1$, we know that $\frac{\alpha}{1-\alpha}$ is negative, and thus:
\[ \awae{\alpha, g}{\tid{f}, \pP, \bm{\delta}} (\xA) \leq 
\awae{\alpha, g}{\tid{f}, \pP} (\xA) \]

\noindent Moreover, application of the triangular inequality yields:
\[ 
\sum_{o'} \norm{W_{\xA}^{o'} + \bm{\delta}}_\alpha \leq 
\sum_{o'} \norm{W_{\xA}^{o'}}_\alpha + \sum_{o'} \norm{\bm{\delta}}_\alpha
\]

\noindent Applying logarithm and multiplying by $\frac{\alpha}{1-\alpha}$ on both sides, we obtain:
\[ 
\awae{\alpha, g}{\tid{f}, \pP, \bm{\delta}} (\xA)
\geq 
\frac{\alpha}{1-\alpha} \cdot \log \left(
\sum_{o'} \norm{W_{\xA}^{o'}}_\alpha + \sum_{o'} \norm{\bm{\delta}}_\alpha \right)
\]

\noindent However, for all $a$ and $b$ in $\Rpe$, we have $\log(a+b) = \log(a)+\log(1+\frac{b}{a})$. Therefore, we conclude that:
\[ 
\awae{\alpha, g}{\tid{f}, \pP, \bm{\delta}} (\xA)
\geq 
\awae{\alpha,g}{\tid{f}, \pP} (\xA)
+
\frac{\alpha}{1-\alpha} \cdot \log \left( 1+
\frac{\sum_{o'} \norm{\bm{\delta}}_\alpha}{\sum_{o'} \norm{W_{\xA}^{o'}}_\alpha}
\right)
\]

\noindent Rearranging the terms and summing over $\xA$ gives us:
\[
\obj_{\alpha,g}(\pP) -
\obj_{\alpha,g}^{\bm{\delta}}(\pP)
\leq
\sum_{\xA} p(\xA) \cdot
\frac{\alpha}{\alpha-1} \cdot \log \left( 1+
\frac{\sum_{o'} \norm{\bm{\delta}}_\alpha}{\sum_{o'} \norm{W_{\xA}^{o'}}_\alpha}
\right)
\]

\noindent Moreover, for all $x$ in $\Rpe$, we know that $\log(1+x) \leq x/\ln (2)$. Thus, we infer:
\begin{equation}
\label{eq:proof_ineq}
\obj_{\alpha,g}(\pP) -
\obj_{\alpha,g}^{\bm{\delta}}(\pP)
\leq
\sum_{\xA} p(\xA) \cdot 
\frac{\alpha}{\alpha-1} \cdot
\frac{\sum_{o'} \norm{\bm{\delta}}_\alpha}{\ln (2) \cdot \sum_{o'} \norm{W_{\xA}^{o'}}_\alpha}
\end{equation}

Furthermore, for all $p$ in $[1, \infty]$ and $n$ in $\Np$, we get from the topological equivalence of the norms in finite dimension that for all $x$ in $\R^n$, we have $\norm{x}_p \geq \norm{x}_1 \cdot n^{\frac{1}{p} - 1}$. Therefore:
\[
\norm{W_{\xA}^{o'}}_\alpha \geq \norm{W_{\xA}^{o'}}_1 \cdot |\cW|^{\frac{1}{\alpha}-1}
\]

\noindent Now, we know that:
\begin{eqnarray*}
\norm{W_{\xA}^{o'}}_1 &=& \sum_w \sum_\xT p(\xT) \cdot p(o' \mid \xT, \xA) \cdot g(w, \xT) \\
&=& \sum_\xT p(\xT) \cdot p(o' \mid \xT, \xA) \cdot \left( \sum_w g(w, \xT) \right)
\end{eqnarray*} 

\noindent Since $g$ is $\beta$-positive, we obtain:
\[
\sum_{o'} \norm{W_{\xA}^{o'}}_\alpha \geq \beta \cdot |\cW|^{\frac{1}{\alpha}-1}
\]

On the other hand, by definition of $\bm{\delta}$ we have:
\[ \sum_{o'} \norm{\bm{\delta}}_\alpha = \delta \cdot |\DOp|\cdot |\cW|^{\frac{1}{\alpha}} \]

\noindent and thus Equation (\ref{eq:proof_ineq}) becomes:
\begin{eqnarray}
\obj_{\alpha,g}(\pP) -
\obj_{\alpha,g}^{\bm{\delta}}(\pP)
&\leq&
\sum_{\xA} p(\xA) \cdot 
\frac{\alpha}{\alpha-1} \cdot
\frac{\delta \cdot |\DOp|\cdot |\cW|^{\frac{1}{\alpha}}}{\ln (2) \cdot \beta \cdot |\cW|^{\frac{1}{\alpha}-1}} \nonumber \\
\label{eq:proof_end}
&\leq& 
\frac{\alpha}{\alpha-1} \cdot
\frac{\delta \cdot |\DOp|\cdot |\cW|}{\ln 2 \cdot \beta}
\end{eqnarray}

Consider now any $\epsilon$ in $\Rpe$. 
In order for $\obj_{[\alpha, g]}(\pP) -
\obj_{\alpha, g}^{\bm{\delta}}(\pP)$ 
not to exceed $\epsilon$, Equation (\ref{eq:proof_end}) ensures that it suffices to have:
\begin{equation}
\label{eq:conditionDelta}
\delta \leq (1 - \frac{1}{\alpha}) \cdot \frac{\epsilon \cdot \beta \cdot \ln (2)}{|\DOp| \cdot |\cW|} 
\end{equation}

Finally, the reverse triangle inequality applied on functions $\obj_{\alpha, g}$ and $\obj_{\alpha, g}^{\bm{\delta}}$ with the uniform norm yields:
\[ | \omega_{\alpha,g} - \omega_{\alpha,g}^{\bm{\delta}} | \leq 
\max_{\pP} | \obj_{\alpha, g}(\pP) - \obj_{\alpha, g}^{\bm{\delta}}(\pP) |
 \]
and thus the condition in (\ref{eq:conditionDelta}) implies $| \omega_{\alpha,g} - \omega_{\alpha,g}^{\bm{\delta}} | < \epsilon$. 

\end{proof}

The last theorem states that, if we are able to solve the optimisation problem $\OP_{\alpha,g}^{\bm{\delta}}(\Delta)$ for any non-zero offset vector $\bm{\delta}$ in $\Rp^{|\DT|}$, then we are able to approximate the optimal outcome of the original optimisation problem $\OP_{\alpha,g}(\Delta)$ with arbitrary precision. 
We now present a method for solving the approximate optimisation problems $\OP_{\alpha,g}^{\bm{\delta}}(\Delta)$. 
\begin{method}
\label{met:delta}
Let us consider the optimisation problem $\OP_{\alpha,g}^{\bm{\delta}}(\Delta)$ of (\ref{equ:opti_delta}) where $\alpha$ is in $]1, \infty[$
The objective function $\obj_{\alpha,g}^{\bm{\delta}}$ is twice differentiable and the constraints are linear. 
Thus, we may apply SQP \cite{nocedal2006sequential,boggs1995sequential} to find a local optimum for $\OP_{\alpha,g}^{\bm{\delta}}(\Delta)$. 
However, as the objective function $\obj_{\alpha, g}^{\bm{\delta}}$ is non-convex, we will use a globalisation technique known as the basin-hopping algorithm \cite{wales1997global}. In order to respect the linear constraints of this problem, the starting points of this algorithm will be drawn from a symmetric Dirichlet distribution. 

\end{method}

This computational method lets us
solve optimisation problems of the form $\OP_{\alpha,g}^{\bm{\delta}}(\Delta)$. 
Consequently, Theorem \ref{thm:opt_delta} enables us to build a method for solving our original optimisation problem $\OP_{\alpha,g}(\Delta)$ with an arbitrary precision $\epsilon$, which we formalise in the next method:
\begin{method}
\label{met:alpha}
We seek a solution of the optimisation problem $\OP_{\alpha,g}(\Delta)$ where $\alpha$ is in $[1, \infty[$ and $g$ is a $\beta$-positive gain function. 

\ourparagraph{Case $\alpha > 1$}
We will solve $\OP_{\alpha,g}(\Delta)$ with a given accuracy $\epsilon > 0$. 
In other words, a solution $\pP$ should satisfy $|\obj_{\alpha,g}(\pP) - \omega_{\alpha,g}| \leq \epsilon$. 
First, let us choose $\delta$ in $\Rpe$ such that:
\[\delta \leq (1 - \frac{1}{\alpha}) \cdot \frac{\epsilon \cdot \beta}{|\DOp| \cdot |\cW|}\]

\noindent Let $\bm{\delta}$ be the vector in $\R^{|\DT|}$ whose components all equal $\delta$. 
We apply Method~\ref{met:delta} in order to solve the optimisation problem $\OP_{\alpha,g}^{\bm{\delta}}(\Delta)$ . 
Let $\pP$ be the solution output by Method \ref{met:delta}. 
By virtue of Theorem \ref{thm:opt_delta}, we have $|\obj_{\alpha,g}(\pP) - \omega_{\alpha,g}| \leq \epsilon$. 

\ourparagraph{Case $\alpha = 1$}
Let $g$ be unitary. We can solve $\OP_{1,g}(\Delta)$ using the same procedure as that of Method \ref{met:delta} since the objective function $\obj_{1, g}$ and the constraints of the problem are twice continuously differentiable. 
\end{method}

Now that we are able to solve the optimisation problem $\OP_{\alpha,g}(\Delta)$ when $\alpha \geq 1$ is finite, we now turn our attention to the case of $\alpha=\infty$. 
In the same way as Method \ref{met:alpha} builds on Method \ref{met:delta} to approximate a solution, our next idea will be to approximate the optimal result of $\OP_{\infty,g}(\Delta)$ with a multiple of that of $\OP_{\alpha,g}(\Delta)$ for a sufficiently large $\alpha$. 

First, we introduce $\overline{\OP}_{\alpha,g}(\Delta)$, a slightly modified version of $\OP_{\alpha,g}(\Delta)$ whose objective function is a multiple of $\obj_{\alpha,g}$. 
Then, we prove that the solutions of $\overline{\OP}_{\alpha,g}(\Delta)$ converge towards a solution of $\OP_{\infty,g}(\Delta)$. Moreover, we make the convergence rate explicit for computational purposes. 
We define $\overline{\OP}_{\alpha, g}(\Delta)$ next:
\begin{definition}
Let $\alpha$ be in $]1, \infty[$. 
\begin{enumerate}
\item We define the function 
$\overline{\obj}_{\alpha,g} \colon \Omega(\DP) \to \Rp$ for all $\pP$ in $\Omega(\DP)$ as:
\begin{equation}
\overline{\obj}_{\alpha,g}(\pP) \coloneqq
\frac{\alpha - 1}{\alpha} \cdot
\obj_{\alpha,g}(\pP)
\end{equation}

\item For $\Delta$ in $\Np$, the optimisation problem $\overline{\OP}_{\alpha,g}(\Delta)$ is:
\begin{equation}
\begin{aligned}
& \underset{\pP \in \Omega(\llrr{-\Delta}{+\Delta})}{\textnormal{maximise}}
& & \overline{\obj}_{\alpha,g}(\pP)
\end{aligned}
\end{equation}

\noindent We write $\overline{\omega}_{\alpha,g}$ denote the optimal objective value for $\overline{\OP}_{\alpha,g}(\Delta)$. 
\end{enumerate}
\end{definition}

From this definition it is clear that $\OP_{\alpha, g}(\Delta)$ and $\overline{\OP}_{\alpha, g}(\Delta)$ are equivalent optimisation problems, in that:
\begin{equation}
\label{eq:link_op_prime}
\overline{\omega}_{\alpha,g} = \frac{\alpha - 1}{\alpha} \cdot \omega_{\alpha,g}
\end{equation}

In fact, the optimal values of $\overline{\OP}_{\alpha,g}(\Delta)$ under-approximate that of $\OP_{\infty,g}(\Delta)$, with an error rate dominated by $\frac{1}{\alpha}$:
\begin{theorem}
\label{thm:cvg}
Let the functions $\tau, \theta \colon ] 1, +\infty [ \to \Rp$ be defined as $\tau(\alpha) = \overline{\omega}_{\alpha,g}$ and $\theta(\alpha) = \abs{\omega_{\infty,g} - \overline{\omega}_{\alpha,g}}$. 
Then, for all $\alpha > 1$, we have $\tau (\alpha)\leq \omega_{\infty,g}$, $\lim_{\alpha\to \infty} \tau(\alpha) = \omega_{\infty,g}$, and $\theta(\alpha) = \mathcal{O}(\frac{1}{\alpha})$. 

\end{theorem}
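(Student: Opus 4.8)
The plan is to reduce everything to the behaviour of the $\alpha$-norm $\|\cdot\|_\alpha$ as $\alpha \to \infty$, applied pointwise to the vectors that appear inside $\awae{\alpha,g}{\tid f,\pP}$. First I would write, for each $\pP$ and each $\xA$, the defining identity
\[
\awae{\alpha,g}{\tid f,\pP}(\xA)
= \frac{\alpha}{1-\alpha}\cdot \log\Bigl(\sum_{o'} \bigl\|\vv{\,\textstyle\sum_\xT p(\xT)\,p(o'\mid\xT,\xA)\,g(w,\xT)}_w\bigr\|_\alpha\Bigr),
\]
and multiply by $\frac{\alpha-1}{\alpha}$ to get $\overline{\obj}_{\alpha,g}$ after averaging over $\xA$. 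The key elementary fact is that for a fixed nonnegative vector $v\in(\Rp)^k$ one has $\|v\|_\infty \le \|v\|_\alpha \le k^{1/\alpha}\|v\|_\infty$, hence $\|v\|_\alpha \to \|v\|_\infty$ with an error controlled by $(k^{1/\alpha}-1)\|v\|_\infty = \mathcal{O}(1/\alpha)$ uniformly over $v$ ranging in a bounded set. Since all the inner vectors here have entries bounded by $1$ (as $p(\xT),p(o'\mid\cdot),g(w,\xT)\in[0,1]$ and there are at most $|\cW|$ guesses and $|\DOp|$ outputs), and since $\Omega(\llrr{-\Delta}{+\Delta})$ is compact, this yields a uniform-in-$\pP$ bound of the form $|\overline{\obj}_{\alpha,g}(\pP) - \obj_{\infty,g}\text{-type quantity}| = \mathcal{O}(1/\alpha)$.

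Concretely, I would first argue that $-\frac{\alpha-1}{\alpha}\log\|v\|_\alpha$ converges to $-\log\|v\|_\infty = \HH_{\infty,g}$-style terms, and more precisely that for each $\pP$,
\[
\Bigl|\,\overline{\obj}_{\alpha,g}(\pP) - \obj_{\infty,g}(\pP)\,\Bigr| \le \frac{C}{\alpha}
\]
for a constant $C$ depending only on $|\DA|$, $|\DOp|$, $|\cW|$, $\beta$, and the fixed distributions $\pT,\pS$ --- here $\beta$-positivity of $g$ is needed to bound the argument of $\log$ away from $0$, exactly as in the proof of Theorem~\ref{thm:opt_delta}, so that $\log$ is Lipschitz on the relevant range. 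Taking the supremum over $\pP\in\Omega(\llrr{-\Delta}{+\Delta})$ on both sides and using the reverse triangle inequality for $\sup$ (the same step that closes the proof of Theorem~\ref{thm:opt_delta}), I get $|\overline{\omega}_{\alpha,g} - \omega_{\infty,g}| \le C/\alpha$, i.e.\ $\theta(\alpha) = \mathcal{O}(1/\alpha)$; this simultaneously gives $\lim_{\alpha\to\infty}\tau(\alpha) = \omega_{\infty,g}$.

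For the inequality $\tau(\alpha) = \overline{\omega}_{\alpha,g} \le \omega_{\infty,g}$, I would use the one-sided bound $\|v\|_\alpha \ge \|v\|_\infty$ for nonnegative $v$: since $\frac{\alpha-1}{\alpha}>0$ and $-\log$ is decreasing, this gives $\overline{\obj}_{\alpha,g}(\pP) \le \obj_{\infty,g}(\pP)$ pointwise in $\pP$ (after averaging over $\xA$), hence $\overline{\omega}_{\alpha,g} \le \sup_\pP \obj_{\infty,g}(\pP) = \omega_{\infty,g}$. Actually this also follows directly from Corollary~\ref{cor:lowerbound}/\ref{cor:upperbound}-style monotonicity of R\'enyi-type entropies in $\alpha$, but the norm inequality is self-contained. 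The main obstacle I anticipate is making the error bound genuinely \emph{uniform} in $\pP$: one must ensure the lower bound on $\sum_{o'}\|v\|_\alpha$ (needed to Lipschitz-bound $\log$) does not degrade as $\pP$ varies or as $\alpha\to\infty$. This is handled precisely by $\beta$-positivity, which gives $\sum_{o'}\|W_{\xA}^{o'}\|_\alpha \ge \beta\cdot|\cW|^{1/\alpha - 1} \ge \beta/|\cW|$ exactly as computed in the proof of Theorem~\ref{thm:opt_delta}; with that in hand, the rest is the routine $\log(1+x)\le x/\ln 2$ estimate and the $k^{1/\alpha}-1 = \mathcal{O}(1/\alpha)$ expansion.
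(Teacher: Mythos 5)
Your proposal is correct and follows essentially the same route as the paper: the whole argument rests on the finite-dimensional norm sandwich $\norm{v}_\infty \leq \norm{v}_\alpha \leq |\cW|^{1/\alpha}\,\norm{v}_\infty$ applied to the vulnerability vectors, the one-sided bound $\norm{v}_\alpha \geq \norm{v}_\infty$ for $\tau(\alpha) \leq \omega_{\infty,g}$, and the $1$-Lipschitz property of the supremum to transfer the pointwise-in-$\pP$ estimate to the optimal values. The only real difference is in how the error passes through the logarithm: the paper simply notes that the multiplicative factor $|\cW|^{1/\alpha}$ becomes the additive term $\frac{1}{\alpha}\log|\cW|$ after taking $\log$, which gives the clean uniform bound $\theta(\alpha) \leq \frac{1}{\alpha}\log|\cW|$ with no lower bound on the argument of the logarithm needed; your detour through $\beta$-positivity and a Lipschitz estimate on $\log$ (borrowed from the proof of Theorem~\ref{thm:opt_delta}) is sound but superfluous here, since the perturbation is relative rather than absolute, and it only yields a larger constant depending on $\beta$, $|\DOp|$ and $|\cW|$.
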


\begin{proof}
Let $\pP$ be in $\Omega(\Z)$. 
To simplify notation,
we define the vector $Y_{\xA}^{o'}$ for all $\xA$ in $\DA$ and $o'$ in $\DOp$ as:
\[ Y_{\xA}^{o'} \coloneqq \vv{ \sum_\xT p(\xT \mid o', \xA) \cdot g(w, \xT) }_w  \]

For all $\alpha$ in $]1, \infty]$ and $\pP$ in $\DP$, we have by definition that $\overline{\obj}_{\alpha, g}(\pP)$ equals:
\[ \overline{\obj}_{\alpha,g}(\pP) = - \sum_{\xA} p(\xA) \cdot  
 \log \left( 
\sum_{o'} p(o' \mid \xA) \cdot \norm{Y_{\xA}^{o'}}_\alpha
 \right)
\]

We know that in finite dimension, all the norms are topologically equivalent. In particular, for all $n$ in $\Np$, $x$ in $\R^n$, and $p$ in $]1, \infty[$, we have:
\[
\norm{x}_\infty \leq \norm{x}_p \leq \norm{x}_\infty \cdot n^{\frac{1}{p}}
\]

\noindent Let $\alpha$ be in $]1, \infty[$. 
We thus have:
\[
\norm{Y_{\xA}^{o'}}_\infty
\leq 
\norm{Y_{\xA}^{o'}}_\alpha
\leq
\norm{Y_{\xA}^{o'}}_\infty \cdot |\cW|^{\frac{1}{\alpha}}
\]

\noindent and thus:
\[
\obj_{\infty,g}(\pP) - \frac{1}{\alpha} \cdot \log(|\cW|)
\leq
\overline{\obj}_{\alpha,g}(\pP)
\leq 
\obj_{\infty,g}(\pP)
\]

\noindent From this inequality, we can see that $\tau(\alpha) \leq \omega_{\infty,g}$ for all $\alpha > 1$.
Moreover:
\begin{eqnarray*}
\theta(\alpha) &\leq&
\obj_{\infty,g}(\pP) - \left(\obj_{\infty,g}(\pP) - \frac{1}{\alpha} \cdot \log(|\cW|)\right) \\
&\leq& \frac{1}{\alpha} \cdot \log(|\cW|)
\end{eqnarray*}
and thus $\theta(\alpha) = \mathcal{O}(\frac{1}{\alpha})$. 
In particular, $\theta$ converges to $0$ as $\alpha$ goes to infinity, which ensures that $\tau$ converges at infinity such that $\lim_{\tau\to \infty} \tau(\alpha) = \omega_{\infty,g}$. 
Moreover, for any $\epsilon$ in $\Rpe$, in order to have $\theta(\alpha) \leq \epsilon$, it suffices to have:
\[
\alpha \geq \frac{1}{\epsilon} \log(|\cW|)
\]
\end{proof}

From this theorem, we can build a method for solving the optimisation problems of the form $\OP_{\infty,g}(\Delta)$. Indeed, even though the objective function $\obj_{\infty,g}$ is not twice differentiable, we can approximate the solution of that optimisation problem with that of $\overline{\OP}_{\alpha,g}(\Delta)$ for a sufficiently large $\alpha$. 
We recall that, by~(\ref{eq:link_op_prime}), the optimal value of the latter problem is a multiple of that of $\OP_{\alpha,g}(\Delta)$, which we can solve with Method~\ref{met:alpha}. 
However, Method~\ref{met:alpha} also requires a non-zero accuracy threshold. Thus, for a given $\epsilon$ in $\Rpe$, we will invoke Method~\ref{met:alpha} in order to solve $\OP_{\alpha,g}(\Delta)$ with accuracy  $\frac{\alpha}{\alpha -1} \cdot \frac{\epsilon}{2}$, and we will take advantage of Theorem~\ref{thm:cvg} to ensure that the output of our method indeed approximates the optimal objective value with precision $\epsilon$. 
We formalise this idea next: 
\begin{method}
\label{met:infinity}
Let $\epsilon$ be in $\Rpe$ and let us assume that we wish to solve $\OP_{\infty,g}(\Delta)$ with accuracy $\epsilon$, i.e.\ that the solution $\pP$ we get satisfies $|\obj_{\infty,g}(\pP) - \omega_{\infty,g} | \leq \epsilon$. 
First, we take some $\alpha > 1$ which satisfies:
\begin{equation}
\label{eq:alpha_greater}
\alpha \geq \frac{2}{\epsilon} \cdot \log(|\cW|)
\end{equation}

Then, we invoke Method~\ref{met:alpha} in order to solve $\OP_{\alpha,g}(\Delta)$ with accuracy $\frac{\alpha}{\alpha -1} \cdot \frac{\epsilon}{2}$. 
Let $\pP$ be an optimal solution for this produced by Method \ref{met:alpha}. 
Then, $\pP$ is an optimal solution of $\OP_{\infty,g}(\Delta)$ with accuracy $\epsilon$. 
\end{method}

\begin{proof}
As $\pP$ is the output of Method \ref{met:alpha}, we know that 
\( |\obj_{\alpha,g}(\pP) - \omega_{\alpha,g}| \leq \frac{\alpha}{\alpha -1} \cdot \frac{\epsilon}{2} \).
Multiplying both sides by $\frac{\alpha - 1}{\alpha}$ yields:
\[ |\overline{\obj}_{\alpha,g}(\pP) - \overline{\omega}_{\alpha,g}| \leq \frac{\epsilon}{2} \]

\noindent Moreover, by virtue of Theorem \ref{thm:cvg} and as we have Equation (\ref{eq:alpha_greater}), we know that 
\( | \overline{\omega}_{\alpha,g} - \omega_{\alpha,g} | \leq \frac{\epsilon}{2} \). 
Finally, we know that: 
\[ \overline{\obj}_{\alpha,g}(\pP) \leq \obj_{\infty,g}(\pP) \leq \omega_{\infty,g}\]

\noindent Appealing to the triangular inequality, we then get:
\begin{eqnarray*}
| \obj_{\infty,g}(\pP) - \omega_{\infty,g} | 
&\leq& | \overline{\obj}_{\alpha,g}(\pP) - \omega_{\infty,g} | \\
&\leq& | \overline{\obj}_{\alpha,g}(\pP) - \overline{\omega}_{\alpha,g} | + | \overline{\omega}_{\alpha,g} - \omega_{\infty,g} | \\
&\leq& \epsilon
\end{eqnarray*}
\end{proof}

The next example illustrates how the solution of $\OP_{\infty,g}(\Delta)$ is approximated by the successive solutions of $\overline{\OP}_{\alpha,g}(\Delta)$ for different values of $\alpha$. 
It is worth noting that the calculation of $\alpha$-norms involves the exponentiation of real numbers ranged in $[0, 1]$ which can quickly be rounded to $0$ for large values of $\alpha$. 
In order to mitigate against the effects of such numerical errors, results reports in this paper rely on use of the \texttt{mpmath} Python library, which enables us to perform arbitrary-precision floating-point arithmetic \cite{mpmath}. 
\begin{example}
\label{ex:minNonDiff}
Let us consider $3$ parties $X$, $Y$ and $Z$ with respective inputs $x$, $y$ and $z$, and where $\bA = \{X\}$ is attacking $\bT = \{Y\}$ under spectator $\bS = \{Z\}$. 
Let $\DA = \DT = \DS = \{1, 2\}$. 
Let $\pT$ and $\pS$ be linear distributions over their domains and let $\pA = \{1\colon 1\}$ be a point-mass distribution centred in $1$. 
Function $f\colon \Z^3 \to \Z$ is defined by $f(x, y, z) = 5xy - 2yz$. 

We study the influence of distributions $\pP$ for virtual inputs over $\Omega(\{0, 1\})$ on $\obj_{\infty, \id}$ produced by the output randomisation $\tid{f}$. 
Such two-dimensional distributions $\pP$ will be represented by a single real $r$ in $[0, 1]$, which fully characterises $\pP$ as $\{0\colon r, 1\colon 1-r\}$. 
We evenly discretise the interval $[0, 1]$ into $201$ values for $r$, and we plot the values of $\obj_{\infty, \id}$ in Figure \ref{fig:ex2Dp}.  
In order to see the influence of our smoothing method, we also plot the values of $\overline{\obj}_{\alpha, \id}$ for different values of $\alpha$. 
We can notice that, as suggested by our previous discussion and by Theorem \ref{thm:cvg}, the approximations $\overline{\obj}_{\alpha,\id}$ of $\obj_{\infty, \id}$ are functions that are twice differentiable and that also under-approximate $\obj_{\infty, \id}$. Moreover, larger values of $\alpha$ produce more accurate approximations of the original objective function. 
\begin{figure}
\centering
\begin{tikzpicture}[scale=.9]
	\begin{axis}[
	scale=1.3,
	 ymin=0,
	 xmin=0,
	 xmax=1,
	  xlabel=$\pP(0)$,
	  ylabel=entropy,
	  legend pos=outer north east]
	\addplot[blue] table [y=wme, x=x, mark=none]{op2Dp.dat};
	\addlegendentry{$\obj_{\infty, \id}(\pP)$}
	\addplot[violet, dashed, thick] table [y=p3, x=x, mark=none]{op2Dp.dat};
	\addlegendentry{$\overline{\obj}_{3, \id}(\pP)$}
	\addplot[green!60!black, loosely dotted, ultra thick] table [y=p4, x=x, mark=none]{op2Dp.dat};
	\addlegendentry{$\overline{\obj}_{4, \id}(\pP)$}
	\addplot[orange, densely dotted, very thick] table [y=p10, x=x, mark=none]{op2Dp.dat};
	\addlegendentry{$\overline{\obj}_{10, \id}(\pP)$}
\end{axis}
\end{tikzpicture}
\caption{Approximation of $\obj_{\infty, \id}$ by $\overline{\obj}_{p, \id}$ for $p$ in $\{3, 4, 10\}$ while computing $f(x, y, z) = 5xy - 2yz$ with linear distributions over $\{1, 2\}$ for $\pT$ and $\pS$ and $\pA=\{1\colon1\}$. }
\label{fig:ex2Dp}
\end{figure}

\end{example}

Let us now illustrate how the methods we developed here help us to find virtual distributions that protect the inputs' privacy optimally, given a some accuracy bound on the distorted output.
\begin{example}
\label{example:basic}
Let us consider $3$ parties $X$, $Y$ and $Z$ with respective inputs $x$, $y$ and $z$, and where $\bA = \{X\}$ is attacking $\bT = \{Y\}$ under spectator $\bS = \{Z\}$. 
Let $\DA = \DT = \DS = \llrr{1}{30}$.
Let $\pT$ and $\pS$ be linear distributions over their domains and for the sake of the example, let $\pA = \{5\colon 1\}$ be a point-mass distribution centred in $5$. 
Let us consider the function $f\colon \Z^3 \to \Z$ defined by $f(x, y, z) = x(3y - 5z) + 2z$. 
Let $\cW=\{0, 1\}$ be a set of allowable guesses and let $g\colon \cW \times \DT \to [0, 1]$ be the gain function defined for all $w$ in $\cW$ and $\xT$ in $\DT$ as:
\[
g(w, \xT) = \left\{\begin{array}{ll}
        1 & \text{if } w \equiv \xT \mod 2\\
        0 & \text{otherwise}
        \end{array}
        \right.
\]

In other words, this gain function $g$ measures the information that an attacker has on the least significant bit of the secret $\xT$.
More generally, we can consider other gain functions that could gauge the information that an attacker learns on a particular property of a secret. We note that $g$ is $\beta$-positive with $\beta = 1$. 

In comparison to Example \ref{ex:minNonDiff}, a distribution $\pP$ in $\Omega(\llrr{-1}{1})$ will now be characterised by two variables $\pP(0)$ and $\pP(1)$ since then $\pP(-1) = 1 - \pP(0) - \pP(1)$ will be fixed. The first variable $\pP(0)$ will take its values in $[0, 1]$ while the second one $\pP(1)$ will take its values in $[0, 1 - \pP(0)]$. 
We discretise the interval $[0, 1]$ into $101$ values so that $\pP(0)$ was assigned these values consecutively. For each of these values of $\pP(0)$, the interval $[0, 1-\pP(0)]$ is furthermore discretised into $101$ values that $\pP(1)$ took consecutively. 
For each pair $(\pP(0), \pP(1))$, we compute the value of $\obj_\infty^g(\pP)$ for the corresponding $\pP$, and we plot the resulting graph in Figure \ref{fig:minOp3D}.  
\iftoggle{bigFig}{%
\begin{figure}
\centering
\begin{tikzpicture}[scale=.9]
\begin{axis}[
  view={30}{25},
  width=2*175pt,
  tick label style={font=\scriptsize},
  axis lines=center,
  name=myplot,
	legend style={
	  at={(1,.7)},
	  anchor=south east,
	  draw=black,
	  fill=white,
	  legend cell align=left
	  },
  enlargelimits=.1,
  ymin=0, ymax=1, xmin=0, xmax=1, zmin=0, zmax=0.8,
  xlabel=$\pP(0)$, ylabel=$\pP(1)$, zlabel=$entropy$,
  every axis y label/.append style={at=(ticklabel* cs:0)}]
\addplot3[surf, opacity=0.5] file {op3D_30.dat};
\addlegendentry{$\obj_{\infty,g}(\pP)$}
\end{axis}
\end{tikzpicture}
\caption{Influence of $\pP$ in $\Omega(\llrr{-1}{+1})$ in the optimisation problem $\OP_{\infty,g}(1)$ in the computation of $f(x, y, z) = x(3y - 5z) + 2z$ with linear distributions over $\llrr{1}{30}$ for $\pT$ and $\pS$ and $\pA=\{5\colon 1\}$. }
\label{fig:minOp3D}
\end{figure}
}{%
}

Let us now solve the optimisation problem $\OP_{\infty,g}$ with accuracy  $\epsilon = 10^{-2}$ through Method~ \ref{met:infinity}. Here, $|\cW|$ equals $2$. 
Let us then take $\alpha = \frac{2}{\epsilon} \cdot \log(|\cW|) = 200$. 
We then invoke Method~\ref{met:alpha} to solve $\OP_{\alpha, g}(\Delta)$ with accuracy $\epsilon' = \frac{\alpha}{\alpha - 1} \cdot \frac{\epsilon}{2} = 5.0 \cdot 10^{-3}$. 
Moreover, a combinatorial calculation gives us $|\DOp| = 5656$. 
We thus let:
\[
\beta = (1 - \frac{1}{\alpha}) \frac{\epsilon' \mu \cdot \ln 2}{|\DOp| \cdot |\cW|} = 3.0 \cdot 10^{-7}
\]

\noindent and we let $\bm{\delta}$ be the vector in $\Rpe^2$ whose components all equal $\beta$. 
Finally, we invoke Method \ref{met:delta} to solve 
$\OP_{\alpha,g}^{{\bm{\delta}}}(\Delta)$. 
This produces a nearly optimal solution $\pP_o = \{-1\colon 0.30, 0\colon 0.49, 1\colon 0.21\}$ for which $\obj_{\infty,g}(\pP_o)$ equals $0.77$. 
This ensures that $\omega_{\infty,g}$ is in $[0.77, 0.78]$ while a uniform distribution $\pP_u$ over $\{-1, 0, 1\}$ would have only $\obj_{\infty,g}(\pP_u) = 0.56$. 
\end{example}

\section{Discussion and Future Works}
\label{section:discussion}
In this work, we proposed an approach for quantifying the information that attackers can retrieve about private inputs from public outputs in black box computations of a public function. We also developed concepts and methods for mitigating against such information leakage, by distorting the public function with virtual, private inputs: we introduced some methods for maximising the posterior entropy of the targeted inputs, and developed non-linear optimisation techniques that can compute virtual inputs that optimally trade off the privacy protection stemming from virtual inputs and the accuracy of the distorted output in comparison with the un-distorted output. 

Our approach is generic in that, depending on the nature of the inputs and on the use context of the secure computation, the participants can agree on a particular type of entropy to maximise before entering the optimisation protocol. 
Participants may also want their inputs to be protected with respect to different kinds of entropy, and this could lead us to study multi-objective optimisation and Pareto optimality --- a topic for future work. 
The quantities and distortions that our approach can compute may also inform the risk management of using SCM for the same function repeatedly, with potentially different but related inputs --- such as the logging of daily health data. 

In a practical secure computation, once an optimal virtual distribution $\pP$ has been computed by our methods for a given type of entropy, the participants of the SMC would have to securely produce a virtual input drawn from distribution $\pP$. For example, the parties may enter an SMC protocol in order to produce a value $\phi$ that is secret to all the participants, and that follows distribution $\pP$. To that end, parties may generate locally shares of a virtual input such that the value obtained by the combination of these shares follows the specified distribution. Alternatively, it may also be practical to let a central authority compute the virtual inputs~---~and these virtual inputs could then be fed into SMC protocols in addition to the $x_i$ as seen on the right of Figure~\ref{fig:bb_model}. For example, if parties are health insurance providers, then the computation of virtual inputs by a central authority does not require any proof of compliance with health and data regulations, since the insurance providers would not share sensitive health data with that central authority. Designing such secure protocols is subject to future works. 

Our work considered the prior beliefs on the inputs to be public, constant, and part of the common knowledge. 
In SMC, this would enable participants to come to a consensus in order to agree on a common optimal virtual distribution $\pP$ and to securely compute the output of $f'$. 
In comparison to the setting of SMC which assumes that participants have agreed on a actively or passively secure protocol to use, we assume that participants in our case will agree on an approximate function $f'$ and a virtual distribution $\pP$ that protects the targets' privacy. 
In the case of outsourced computations, those public distributions could simply be used by a trusted third party in order to produce a virtual input drawn from $\pP$ and randomise the computation of $f$. 

On the other hand, it would be of interest to relax these assumptions.
In particular, computing an optimal virtual distribution $\pP$ requires having a prior belief $\pA$ on the attackers' input. Distribution $\pP$ would then maximise the targeted inputs' privacy given the prior belief $\pA$. But as the computation of $\pP$ can be performed offline by any of the parties, this could enable an attacker to substitute his input accordingly. This could thus update the belief $\pA$ and we would require another computation of $\pP$. 
The setting where two attackers would try to learn information about each other's input could also lead to interesting game-theoretic situations to be studied in future work. 

We also assumed that the partition of the participants into attackers, targets and spectators was given, but it would be of interest to develop techniques that can protect the participants' inputs when the set of potential attackers is not known. 
Moreover, we would like to further generalise our approach to the secure computation of vector-valued functions, i.e. of functions that compute several outputs, and where each of the outputs can be open to different sets of parties. 
Finally, scaling our approach to large input spaces is also one of our future research objectives. 

\section{Conclusion}
\label{section:conclusion}
Although efficient SMC protocols have been designed, information flow of outputs is inevitable, and has recently been rigorously formalised and quantified \cite{ah2017secure}. 
In this work, we first proposed a generalised notion of entropy that makes our approach compatible with different widely used measures of information flow. 
We then introduced the concepts of function substitution and virtual input that aim at randomising the output of SMC computations in order to impede the influence of deceitful attackers wishing to use input substitution to gain maximal information about private inputs from opened outputs. 
We have established some theoretical bounds for the privacy gain that approximations and close approximations provide. 
We then focused on additive approximations and formalised an optimisation problem that aims at maximising participants' privacy while controlling the distortion introduced on the output. 
We proposed different methods for solving such optimisation problems in practice and we experimentally showed that additive approximations give rise to significant privacy gains under specified distortion bounds.

\addcontentsline{toc}{section}{References}
\bibliographystyle{unsrt}
\bibliography{./references}
\end{document}